\pgfplotsset{compat=1.18}
\newtheorem{theorem}{Theorem}
\newtheorem{lemma}[theorem]{Lemma}
\newcommand{\bk}{\mbox{\bf k}}
\newcommand{\bw}{\mbox{\bf w}}
\newcommand{\bC}{\mbox{\bf C}}
\newcommand{\bT}{\mbox{\bf T}}
\newcommand{\bV}{\mbox{\bf V}}
\newcommand{\bj}{\mbox{\bf j}}
\newcommand{\be}{\begin{equation}}
\newcommand{\ee}{\end{equation}}
\newcommand{\ZZ}{{\bf Z}}
\title{Secure Order Based Voting Using Distributed Tallying}
\author{
Tamir Tassa\thanks{The Open University, Israel. \texttt{tamirta@openu.ac.il}} \and
Lihi Dery\thanks{Ariel University, Israel. \texttt{lihid@ariel.ac.il}} \and
Arthur Zamarin\thanks{The Open University, Israel. \texttt{arthurzam@gmail.com}}
}
\date{}
\begin{document}

\maketitle

\begin{abstract}
\noindent
Electronic voting systems have significant advantages in comparison with physical voting systems. One of the main challenges in e-voting systems is to secure the voting process: namely, to certify that the computed results are consistent with the cast ballots and that the voters' privacy is preserved. We propose herein a secure voting protocol for elections that are governed by order-based voting rules. Our protocol, in which the tallying task is distributed among several independent talliers, offers perfect ballot secrecy in the sense that it issues only the required output while no other information on the cast ballots is revealed. Such perfect secrecy, achieved by employing secure multiparty computation tools, may increase the voters' confidence and, consequently, encourage them to vote according to their true preferences. We implemented a demo of a voting system that is based on our protocol and we describe herein the system’s components and its operation. Our implementation demonstrates that our secure order-based voting protocol can be readily implemented in real-life large-scale electronic elections.
\end{abstract}

\section{Introduction}\label{intro}
Electronic voting has significant advantages in comparison with the more commonplace physical voting:
it is faster, it reduces costs, it is more sustainable,
and in addition, it may increase voters' participation. In recent years, due to the COVID-19 pandemic and the need for social distancing, e-voting platforms have become even more essential.
One of the main challenges in holding e-voting is to secure the voting process:
namely, to make sure that it is secure against attempted manipulations so that the computed results are consistent with the cast ballots and that the privacy of the voters is preserved.

The usual meaning of voter privacy is that the voters remain anonymous. Namely,  
    the linkage between ballots and the voters that cast them remains hidden, thus preserving anonymity. However, the final election results and the full tally (i.e., all ballots) are revealed.  
    While such information exposure may sometimes be considered benign or even desired, in some cases it may be problematic.
    For example, in small-scale elections---such as those for a university senate---publishing the full tally may compromise ballot secrecy. Suppose Bob is a candidate and only a small number of voters participate. If the final tally reveals that Bob received zero votes, and it is known or assumed that Alice supported him, observers (including Bob himself) may infer that Alice did not follow through. Such inference risks may deter voters from expressing their true preferences, either out of fear of social consequences or due to pressure from candidates or peers, particularly in close-knit communities.  
    
    In contrast to weaker privacy notions, such as anonymity or partial tally hiding, perfect ballot secrecy ensures that nothing is disclosed except for the final results. The final results could be just the identity of the winner (say, when selecting an editor-in-chief or a prime minister), $K>1$ winners (e.g., when needing to select $K$ new board members), $K>1$ winners with their ranking (say, when needing to award first, second and third prizes),  or a score for each candidate (as is the case in elections for parliament, where the candidates are parties and the score for each party is the number of seats in the parliament).  
    
    In this paper, we present the design of a tally-hiding voting system that ensures perfect ballot secrecy---a concept sometimes referred to as full privacy (see, e.g.,~\citep{chaum1988}).
    Such secrecy ensures that given any coalition of voters, the protocol does not reveal any information on the ballots beyond what can be inferred from the published results.  
    Such perfect ballot secrecy may reduce the possibility of voters’ coercion and increase their confidence that their vote remains secret. Hence, it may encourage voters to vote according to their true preferences.

There are various families of voting rules that can be used in elections. For example, in {\em score-based} voting rules, a score for each candidate is computed subject to the specifications of the underlying voting rule, and then the winning candidate is the one whose aggregated score is the highest. 
In this study, we focus on {\em order-based} (a.k.a {\em pairwise-comparison}) voting rules, where the relative order of the candidates is considered by the underlying voting rule~\citep{brandt2005decentralized}.

\textbf{Our contributions.} We consider a scenario in which there is a set of voters that hold an election over a given set of candidates, where the election is governed by an order-based voting rule. We devise a fully private protocol for computing the election results. The election output is a ranking of the candidates from which the winning candidate(s) can be determined. Our protocol is lightweight and can be readily implemented in virtual elections.

\medskip
In order to provide privacy for the voters, it is necessary to protect their private data (i.e. their ballots) from the tallier while still allowing the tallier to perform the needed computations on the ballots in order to output the required election results. The cryptographic tool that is commonly used towards this end is {\em homomorphic encryption}, namely, a form of encryption that preserves the algebraic structure and thus enables the performance of meaningful computations on the ciphertexts.
However, homomorphic encryption has a significant computational overhead. The main idea that underlies our suggested system is to use {\em distributed} tallying. Namely, our system involves $D>1$ independent talliers to whom the voters send information relating to their private ballots. With such distributed tallying,
it is possible to replace the costly cryptographic protection shield of homomorphic encryption with the much lighter-weight cryptographic shield of secret sharing (all relevant cryptographic background is provided in Section \ref{background}). The talliers then engage in specially designed protocols of multiparty computation that allow them to validate that each cast ballot 
is a legitimate ballot (in the sense that it complies with the specifications of the underlying voting rule) and then to compute from the cast ballots the required election results, while still remaining totally oblivious to the content of those ballots. A high-level description of our system is presented in Figure \ref{fig: high-level}. The first phase is the voting phase: the voters send to each of the talliers messages (shares) relating to their secret ballots. In the second phase, the talliers communicate among themselves in order to validate each of the incoming ballots (which remain secret to them) and then, at the end of the day, perform the tallying over all legal ballots according to the underlying rule. Finally, the talliers broadcast the results back to the voters. 

\begin{figure*}
    \begin{center}
    \includegraphics[scale=0.4]{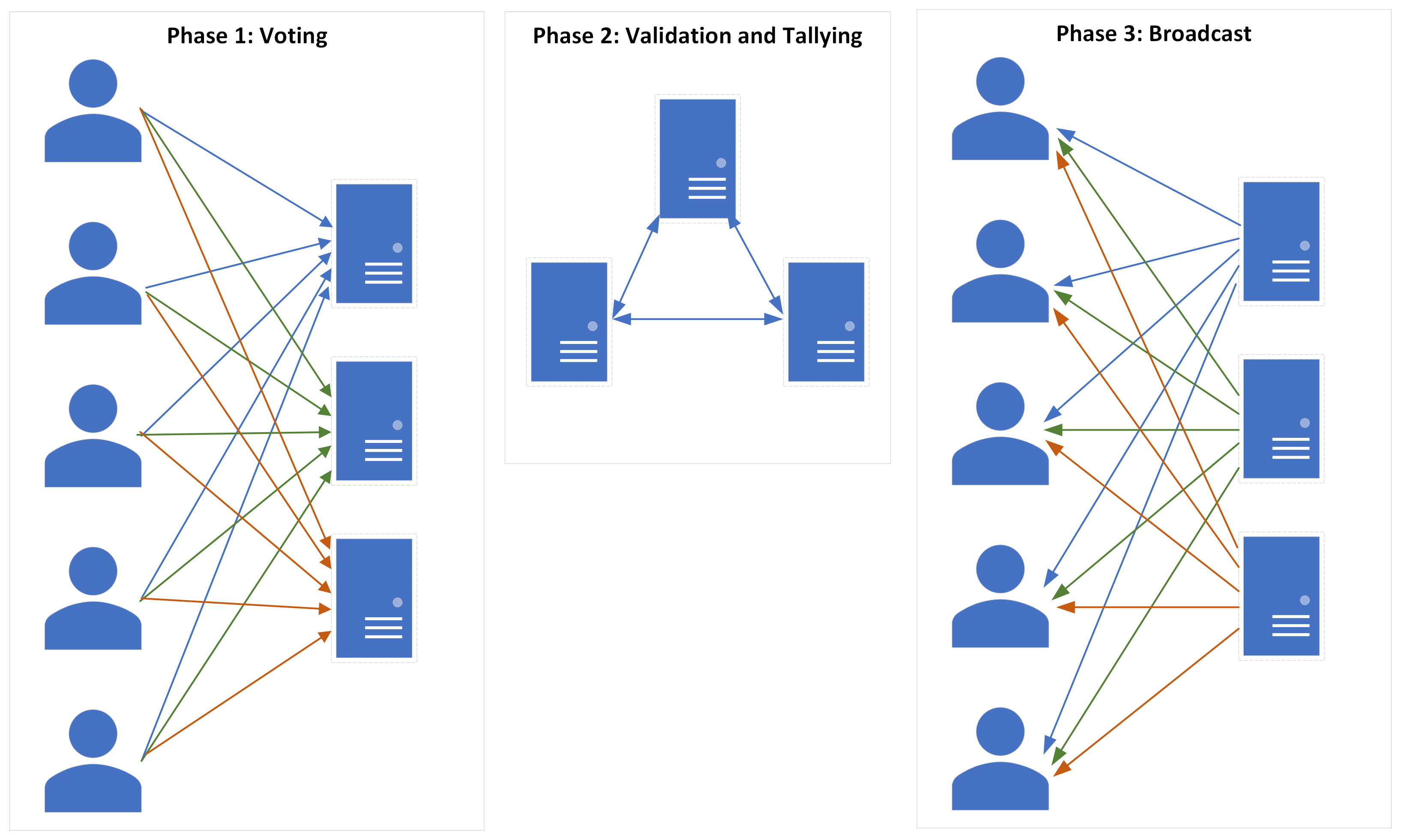}
    \caption{A high-level description of the protocol.}
    \label{fig: high-level}
    \end{center}
\end{figure*}

The bulk of this study is devoted to two order-based voting rules -- {\sc Copeland}~\citep{copeland1951reasonable}, and
{\sc Maximin}~\citep{simpson1969defining} (a.k.a {\sc Kramer-Simpson}), 
which are formally defined in 
Section \ref{formaldef}. In Appendix \ref{other} we describe two other rules in this family, {\sc Kemeny-Young}~\citep{kemeny1959mathematics,young1995optimal} and {\sc Modal Ranking} \citep{caragiannis2014modal}, and describe the extension of our methods for those rules
as well. 

The paper is structured as follows.
In Section \ref{background} we explain the cryptographic principles of distributed tallying.
In Section \ref{semihonest} we present our secure voting protocols
for {\sc Copeland} and {\sc Maximin} rules. We focus there on the case where voters must determine a strict ranking of the candidates. Then, in Section \ref{ties}, we extend our discussion to the more involved case in which the voters' rankings may include ties between some of the candidates.
In Section \ref{demo}
we describe an implementation of a secure voting system that is based on our protocols;  our implementation
is in Python and is open source.
In Section \ref{related} we survey related work. Lastly, Section \ref{conc} provides concluding comments, as well as a description of an implementation of our presented protocol and system in the Gentoo\footnote{https://www.gentoo.org/} council elections. 

The Appendix includes additional discussions. In \ref{scossd} we describe cryptographic sub-protocols that we use in our protocol, while in \ref{other} we discuss secure protocols for two other order-based voting rules $-$ {\sc Kemeny-Young} and {\sc Modal Ranking}.

\section{Distributed tallying}\label{background}
A key principle of our protocol is distributing the tallying task among multiple independent talliers.
This distribution is supported by secret sharing~\citep{Shamir79} and 
secure multiparty computation (MPC)~\citep{yao}.
In Section \ref{ssharing} we give a brief recap of secret sharing.
Then, in Section \ref{principles} we describe the distributed tallying on which our protocol is based. 

Our protocol utilizes MPC sub-protocols that perform secure computations over secret shares. We treat those sub-protocols as black boxes. We describe their inputs and outputs in Section \ref{principles}. Readers seeking a deeper understanding of these MPC components may refer to \ref{scossd}.

\subsection{Secret sharing}\label{ssharing}
Secret sharing schemes~\citep{Shamir79} enable distributing a secret $s$ among a set of parties,
$\bT=\{T_1,\ldots,T_D\}$.
Each party, $T_d$, $d \in [D]:=\{1,\ldots,D\}$,
is given a random value $s_d$, called {\em a share}, that relates to the secret $s$. Those shares satisfy the following two conditions: (a) $s$ can be reconstructed only by combining the shares given to specific {\em authorized} subsets of parties, and (b) shares belonging to unauthorized subsets of parties reveal zero information on $s$.
In {\em threshold} secret sharing there is some threshold $D'\leq D$ and then the authorized subsets
are those of size at least $D'$. Such secret sharing schemes are called $D'$-out-of-$D$. 

Shamir's $D'$-out-of-$D$ secret sharing scheme operates over a finite field $\ZZ_p$, where $p>D$ is a prime sufficiently large so that all possible secrets may be represented in $\ZZ_p$.
It has two procedures: {\sf Share} and {\sf Reconstruct}:

\smallskip
$\bullet$ ${\sf Share}_{D',D}(x)$.
    The procedure samples a uniformly random polynomial $g(\cdot)$ over $\ZZ_p$, of degree at most $D'-1$, where the free coefficient is the secret $s$. That is, $g(x)=s+a_1 x + a_2 x^2 + \ldots + a_{D'-1} x^{D'-1}$, where $a_j$, $1 \leq j \leq D'-1$, are selected independently and uniformly at random from $\ZZ_p$. The procedure outputs $D$ values, $s_d=g(d)$, $d \in [D]$, where $s_d$ is the share given to the $d$th party, $T_d$, $d \in [D]$. 

\smallskip
$\bullet$ ${\sf Reconstruct}_{D'}(s_1,\ldots,s_D)$. The procedure is given any selection of $D'$ shares out of $\{ s_1,\ldots,s_D \}$; it then interpolates a polynomial $g(\cdot)$ of degree at most $D'-1$ using the given point values, and it outputs $s=g(0)$. 
Clearly, any selection of $D'$ shares out of the $D$ shares
will yield the same polynomial $g(\cdot)$ that was used to generate the shares, as $D'$ point values determine a unique polynomial of degree at most $D'-1$. Hence, any selection of $D'$ shares will issue the secret $s$.
On the other hand, any selection of $D'-1$ shares reveals nothing about the secret $s$ (in the sense that every value in the field remains equally probable to be the secret $s$).

\subsection{Distributed tallying and secure computations over secret shares}\label{principles}
Our protocol involves a set of parties, $\{T_1,\ldots,T_D\}$, that are called {\em talliers}.
In the protocol, each voter creates shares of his\footnote{For the sake of simplicity, we keep referring to parties by the pronoun ``he''. In our context, those parties may be voters, who are humans of any gender, or talliers, that are typically (genderless) servers.} private ballot and distributes them to the $D$ talliers.
Since those ballots are matrices over $\ZZ_p$, as we discuss in Section \ref{semihonest}, 
the secret sharing is
carried out for each matrix
entry independently.

We will use Shamir's $D'$-out-of-$D$ secret sharing with $D' := \lfloor (D+1)/2 \rfloor$ (namely, the value $(D+1)/2$ rounded down to the nearest integer). 
With that setting, at least half of the talliers would need to collaborate in order to recover the secret ballots. If the set of talliers is trusted to have an {\em honest majority}, then such a betrayal scenario is impossible. Namely, if more than half of the talliers are honest, in the sense that they would not attempt cheating, then even if all other dishonest talliers collude in attempt to recover the secret ballots from the shares that they hold, they will not be able to extract any information on the ballots. 
Higher values of $D$ (and consequently of
$D' := \lfloor (D+1)/2 \rfloor$)
will imply greater security against coalitions of corrupted talliers, but they will also imply
higher costs.

The first task of the talliers upon receiving the shares of a voter's ballot matrix is to verify that those shares correspond to a legal ballot
(see Section \ref{legal} where we explain what constitutes a legal ballot). 
Then, at the end of the election period, the talliers need to compute the identity of the winning candidates, as dictated by the rule. These tasks would be easy if the talliers could use their shares in order to recover the ballots. However, they must not do so, in order to protect the voters' privacy. Instead, they must perform these computations on the distributed shares, 
without revealing the shares to each other. As we shall see later on, these computations boil down to the four specific tasks that we proceed to describe. 

Let $x_1,\ldots,x_L$ be $L$ secrets in the underlying field $\ZZ_p$ ($L$ is any integer) and assume that the talliers hold $D'$-out-of-$D$ shares of each of those secrets.
Then the talliers can perform the following computational tasks {\em without} recovering the secrets $x_1,\ldots,x_L$ or learning anything about them:
\begin{enumerate}
    \item \textbf{Evaluating arithmetic functions.} If $y=f(x_1,\ldots,x_L)$, where $f$ is some public arithmetic function, compute $D'$-out-of-$D$ shares of $y$.
    \item \textbf{Secure comparisons.} If $y=1_{x_1<x_2}$ is the bit that equals 1 if $x_1<x_2$ and 0 otherwise\footnote{
For any predicate $\Pi$, we denote by $1_{ \Pi }$ the bit that equals 1 if and only if (iff hereinafter) the predicate $\Pi$ holds.}, compute $D'$-out-of-$D$ shares of $y$.
\item \textbf{Testing positivity.} If $p>2N$ (where $N$ is the number of voters) and $x_1 \in [-N,N]$, compute $D'$-out-of-$D$ shares of the bit $1_{x_1>0}$.
\item \textbf{Testing equality to zero.} If $p>2N$ and $x_1 \in [-N,N]$, compute $D'$-out-of-$D$ shares of the bit $1_{x_1=0}$.
\end{enumerate}

All those computational tasks can be carried out by secure multiparty computation (MPC)~\citep{yao} sub-protocols. We treat those computations as black boxes. Namely, it is sufficient to understand what are their inputs and outputs, as described above, but there is no need to understand their internal operation. However, interested readers may refer to \ref{scossd} for more information on those computations.

\section{A secure order-based voting protocol}\label{semihonest}
In this section, we describe our method for securely computing the winners in two order-based voting rules, {\sc Copeland} and {\sc Maximin}.\footnote{Our protocols are designed for scenarios in which it is required to issue just the identity of the winners. In scenarios where more information is needed, such as a ranking of all candidates, or even a score for each candidate, our protocols can easily output also that additional information.} 
We begin with formal definitions in Section \ref{formaldef}.
Then, in Section \ref{legal}, we characterize legal ballot matrices for each of the two rules. Such a characterization is an essential part of our method since the talliers need to verify, in an oblivious manner, that each cast ballot is indeed legal, and does not hide a malicious attempt
to cheat or sabotage the elections. In other words, the talliers need to verify the legality of each cast ballot only through its secret shares, i.e., without getting access to the actual ballot. The characterization that we describe in Section \ref{legal} will be used later on to perform such an oblivious validation.

Then,
in Section \ref{sec11} we introduce our secure voting protocol.
The protocol description includes sub-protocols for
validating the cast ballots and a sub-protocol for computing the final election results from all legal ballots.
The validation sub-protocols are described in Sections \ref{validating} and \ref{validating2}.
The sub-protocols for computing the election results are described in Sections \ref{seccop} and \ref{secmax} for {\sc Copeland} and {\sc Maximin} rules, respectively. 

We conclude this section with a discussion on how to set the size of the underlying field in which all computations take place (Section \ref{ppp}), and a discussion of
the overall security of our protocol (Section \ref{overallsecurity}).

\subsection{Formal definitions}\label{formaldef}
We consider a setting in which there are $N$ voters, $\bV=\{V_1,\ldots,V_N\}$, that wish to hold an election over $M$ candidates, $\bC = \{C_1,\ldots,C_M\}$.
The output of the election is a ranking of $\bC$.
We proceed to define the two order-based rules for which we devise a secure MPC protocol in this section.

$\bullet$ {\sc Copeland.}
Define for each $V_n$ a matrix
$ P_n = (P_n(m, \allowbreak m'): m, m' \in [M])$,
where $P_n(m,m')=1$ if $C_m$ is ranked higher than $C_{m'}$ in $V_n$'s 
ranking,
$P_n(m,m')=-1$ if $C_m$ is ranked lower than $C_{m'}$, and all diagonal entries are 0.
Then the sum matrix, 
\be P = \sum_{n=1}^N P_n \,, \label{pdef}\ee
induces the following score for each candidate:
\be 
\bw(m) := |\{ m' \neq m : P(m,m')> 0\}| + 
\alpha |\{ m' \neq m: P(m,m')= 0 \}|\,.\label{copeland}
\ee
Namely, $\bw(m)$ equals the number of candidates $C_{m'}$ that a majority of the voters ranked lower than $C_m$, plus $\alpha$ times
the number of candidates $C_{m'} $ who broke even with $C_m$.
The parameter $\alpha$ can be set to any rational number between 0 and 1. The most common setting is $\alpha = \frac{1}{2}$; the {\sc Copeland} rule with this setting of $\alpha$ is known as {\sc Copeland}$^\frac{1}{2}$~\citep{faliszewski2009llull}.

\smallskip
$\bullet$ {\sc Maximin.}
Define the matrices $P_n$ so that $P_n(m,m')=1$ if $C_m$ is ranked higher than $C_{m'}$ in $V_n$'s ranking,
while $P_n(m,m')=0$ otherwise.
As in {\sc Copeland} rule, we let $ P$ denote the sum of all ballot matrices, see Eq. (\ref{pdef}). Then  $P(m,m')$ is the number of voters who preferred $C_m$ over $C_{m'}$.
The final score of $C_m$, $m \in [M]$, is then set to
$ \bw(m):=\min_{m' \neq m} P(m,m') $.

In both these order-based rules, we shall refer hereinafter to the matrix $P_n$ as $V_n$'s {\em ballot matrix}, and to $P$ as the aggregated ballot matrix.

\subsection{Characterization of legal ballot matrices}\label{legal}

Here, we characterize the ballot matrices in each of the two order-based rules that we consider. Such a characterization will be used later on in the secure voting protocol.

\begin{theorem}\label{thm1}
An $M \times M$ matrix $Q$ is a valid ballot under the {\sc Copeland} rule iff it satisfies the following conditions:
\begin{enumerate}
    \item $Q(m,m') \in \{-1,1\}$ for all $1 \leq m < m' \leq M$;
    \item $Q(m,m)=0$ for all $m \in [M]$;
    \item $Q(m',m)+Q(m,m')=0$ for all $1 \leq m < m' \leq M$;
    \item If $Q_m:=\sum_{m' \in [M]} Q(m,m')$ then $Q_m \neq Q_{m'}$ for all $1 \leq m<m' \leq M$.
\end{enumerate}
\end{theorem}

\begin{theorem}\label{thm2}
An $M \times M$ matrix $Q$ is a valid ballot under the {\sc Maximin} rule iff it satisfies the following conditions:
\begin{enumerate}
    \item $Q(m,m') \in \{0,1\}$ for all $1 \leq m < m' \leq M$;
    \item $Q(m,m)=0$ for all $m \in [M]$;
    \item $Q(m',m)+Q(m,m')=1$ for all $1 \leq m < m' \leq M$;
    \item If $Q_m:=\sum_{m' \in [M]} Q(m,m')$ then $Q_m \neq Q_{m'}$ for all $1 \leq m<m' \leq M$.
\end{enumerate}
\end{theorem}

Conditions 1-3 in both theorems are clear. As for condition 4, it states that the pairwise relations that $Q$ defines are consistent with a linear ordering of the $M$ candidates. As a counter-example, consider the following {\sc Copeland} ballot matrix over $M=3$ candidates:
$$ Q= \left( \begin{array}{rrr} 0 & 1 & -1  \\ -1 & 0 & 1 \\
 1 & -1 &  0
\end{array} \right) \,. $$
It complies with conditions 1-3 in Theorem \ref{thm1}, but violates condition 4 since $Q_1=Q_2=Q_3=0$. Hence, it is an illegal matrix.
Indeed, the matrix states that $C_1>C_2$ (since $Q(1,2)=1$), $C_2>C_3$, and $C_3>C_1$. Such cyclicity is impossible in a linear ordering.  

\medskip
{\bf Proof of Theorem \ref{thm1}.}
Assume that the ordering of a voter over the set of candidates $\bC$ is $(C_{j_1},\ldots,C_{j_M})$ where $\bj:=(j_1,\ldots,j_M)$ is some permutation of $[M]$. Then the ballot matrix that such an ordering induces is 
\[
     Q=(Q(m,m'))_{1 \leq m,m' \leq M}    
\]
where $Q(m,m')=1$ if $m$ appears before $m'$ in the sequence $\bj$, 
$Q(m,m')=-1$ if $m$ appears after $m'$ in $\bj$, and 
$Q(m,m)=0$ for all $m \in [M]$. Such a matrix clearly satisfies conditions 1, 2 and 3 in the theorem. It also satisfies condition 4, as we proceed to show. Fix $m \in [M]$ and let $k \in [M]$ be the unique index for which $m=j_k$. Then
the $m$th row in $Q$ consists of exactly $k-1$ entries that equal $-1$, $M-k$ entries that equal $1$, and a single entry on the diagonal that equals 0. Hence, $Q_m$, which is the sum of entries in that row, equals $M+1-2k$. Clearly, all those values are distinct, since the mapping $m \mapsto k$ is a bijection. That completes the first part of the proof: every legal {\sc Copeland} ballot matrix satisfies conditions 1-4.

Assume next that $Q$ is an $M \times M$ matrix that satisfies conditions 1-4. Then for each $m \in [M]$, the $m$th
row in the matrix consists of a single 0 entry on the diagonal where all other entries are either $1$ or $-1$. Assume that the number of $-1$ entries in the row equals $k(m)-1$, for some $k(m) \in [M]$, while the number of $1$ entries equals $M-k(m)$. Then the sum $Q_m$
of entries in that row equals $M+1-2k(m)$. As, by condition 4,
all $Q_m$ values are distinct, then $k(m)\neq k(m')$ when $m \neq m'$. Stated otherwise, the sequence $\bk:=(k(1),\ldots,k(M))$ is a permutation of $[M]$. Let $\bj:=(j_1,\ldots,j_M)$ be the inverse permutation of $\bk$; i.e., for each $m \in [M]$, $j_{k(m)}=m$. Then it is easy to see that the matrix $Q$ is the
{\sc Copeland} ballot matrix that corresponds to the ordering $\bj$. That completes the second part of the proof: every matrix that satisfies conditions 1-4 is a legal {\sc Copeland} ballot matrix. 
$\qed$

\smallskip
The proof of Theorem \ref{thm2} is similar to that of Theorem \ref{thm1} and thus omitted.

\medskip
We conclude this section with the following observation. Let us define a projection mapping $\Gamma:\ZZ_p^{M \times M} \mapsto \ZZ_p^{M(M-1)/2}$, which takes an $M \times M$ matrix $Q \in \ZZ_p^{M \times M}$ and outputs its upper triangle,
\be \Gamma(Q):=(Q(m,m'): 1 \leq m < m' \leq M) \,.\label{GammaDef}\ee
Conditions 2 and 3 in Theorems \ref{thm1} and \ref{thm2} imply that every ballot matrix, $P_n$, is fully determined by its upper triangle, $\Gamma(P_n)$, in either of the two voting rules that we consider.

 \subsection{The protocol}\label{sec11}
Here we present Protocol \ref{algvote}, a privacy-preserving implementation of
the {\sc Copeland} and {\sc Maximin} order-based rules.
The protocol computes, in a privacy-preserving manner, the winners in elections that are governed by those rules.
It has two phases. We begin with a bird's-eye view of those two phases, and afterward, we provide a more detailed explanation.

Phase 1 (Lines 1-8) is the voting phase. In that phase, each voter $V_n$, $n \in [N]:=\{1,\ldots,N\}$, constructs his ballot matrix, $P_n$ (Line 2), and then creates and distributes to all talliers corresponding $D'$-out-of-$D$ shares, with $D'=\lfloor (D+1)/2 \rfloor$,
as described in Section \ref{ssharing} (Lines 3-7).
Following that, the talliers jointly verify the legality of the shared ballot (Line 8). 
Phase 2 of the protocol (Lines 9-11) is carried out after the voting phase had ended. In that second phase, the talliers perform an MPC sub-protocol on the distributed shares in order to find 
the winning candidates, while remaining oblivious to the actual ballots.

After constructing his ballot matrix in Line 2, voter $V_n$, $n \in [N]$, samples a random share-generating polynomial of degree $D'-1$ for each of the $M(M-1)/2$ entries in $\Gamma(P_n)$, where $\Gamma$ is the projection mapping defined in Section \ref{legal} (Lines 3-5). Then, $V_n$ sends each tallier $T_d$ his relevant share of each of those entries, namely, the value of the corresponding share-generating polynomial at $x=d$, 
$d \in [D]$ (Lines 6-7).

In Line 8, the talliers engage in an MPC sub-protocol to verify the legality of $V_n$'s ballot, without actually recovering that ballot.
There are two validation sub-protocols that need to be executed and we describe them in Sections \ref{validating} and \ref{validating2}. Ballots that are found to be illegal are discarded. (In such cases it is possible to notify the voter that his ballot was rejected and allow him to resubmit it.)

Phase 2 (Lines 9-11) takes place at the end of the voting period, 
after the voters had cast their
ballots.
First (Lines 9-10), each of the talliers, $T_d$, $d \in [D]$, computes his $D'$-out-of-$D$ share, denoted $G_d(m,m')$, in the $(m,m')$th entry of the aggregated ballot matrix $P$, see Eq. (\ref{pdef}), for all $1 \leq m<m' \leq M$. This computation follows from the linearity of the secret-sharing scheme. Indeed, as 
$g_{n,m,m'}(d)$ is $T_d$'s share of $P_n(m,m')$ in a $D'$-out-of-$D$ Shamir's secret sharing, then 
$G_d(m,m') = \sum_{n \in [N]} g_{n,m,m'}(d)$ is a 
$D'$-out-of-$D$ Shamir's secret share of $P(m,m') = \sum_{n \in [N]} P_n(m,m')$.

Then, in Line 11, the talliers engage in an MPC
sub-protocol in order to find the indices of the $K$ winning candidates.
How can the talliers do that when none of them actually holds
the matrix $P$? We devote our discussion in Sections \ref{seccop} and \ref{secmax} to that interesting computational challenge.

\SetAlgorithmName{Protocol}{}{}
\begin{algorithm}[htb]
\SetAlgoLined
\DontPrintSemicolon
    \SetKwInput{Input}{Input}
    \Input{A set of $M$ candidates $\bC$; $K \in [M]$; a set of voters $\bV$.}
\ForAll{$V_n$, $n \in [N],$}{
Construct the ballot matrix, $P_n$, according to the selected indexing of candidates and the voting rule

\ForAll{$1 \leq m < m' \leq M$}{
Select uniformly at random $a_{n,m,m',j} \in \ZZ_p$, $1 \leq j \leq D'-1$

Set $g_{n,m,m'}(x) = P_n(m,m') + \sum_{j=1}^{D'-1}
a_{n,m,m',j}x^j$
}
\ForAll{$d\in [D]$}{
Send to $T_d$ the set $\{ (n,m,m',g_{n,m,m'}(d)): 1 \leq m<m' \leq M\}$
}
After all talliers receive their shares of $V_n$'s ballot, they engage in an MPC sub-protocol to check its legality
}
\ForAll{$T_d$, $d \in [D]$}{
Set $G_d(m,m') = \sum_{n \in [N]} g_{n,m,m'}(d)$, for all $1 \leq m<m' \leq M$
}
$T_1,\ldots,T_D$ find the indices of the $K$ winners and publish them

\SetKwInOut{Output}{Output}
    \Output{The $K$ winning candidates from $\bC$.}
\caption{A basic protocol for secure order-based voting}
\label{algvote}
\end{algorithm}

\subsection{Verifying the legality of the cast ballots}\label{validating}
Voters may attempt to cheat by submitting illegal ballots in order to help their preferred candidate.
For example, a dishonest voter may send the talliers shares of the matrix $cQ$,
where $Q$ is his true ballot matrix and $c$ is an ``inflating factor" greater than 1. 
If such a cheating attempt remains undetected, then that voter would manage to multiply his vote by a factor of $c$. The shares of the illegal ``inflated" matrix are indistinguishable from the shares of the legal
ballot matrix (unless, of course, a majority of $D'$ talliers collude and recover the ballot $-$ a scenario that is impossible under our assumption that the talliers have an honest majority).
Hence, it is necessary to devise a mechanism that would enable the talliers to check that the shares they received from each of the voters correspond to a legal ballot matrix, without actually recovering that matrix.

Malicious voters can sabotage the elections also in other manners. For example, a voter may create a legal ballot matrix $Q$ and then alter the sign of some of the $\pm 1$ entries in the shared upper triangle $\Gamma(Q)$, so that the resulting matrix no longer corresponds to an ordering of the candidates. 
Even though such a manipulated
matrix cannot serve a dishonest voter in an attempt to help a specific candidate, it still must be detected and discarded. Failing to detect the illegality of such a ballot may result in an aggregated matrix $P$ that
differs from the aggregated matrix $P'$ that corresponds to the case in which that ballot is rejected. In such a case, the set of winners that $P$ would determine may differ from that which $P'$ determines. In summary, it is essential to validate each of the cast ballots in order to prevent any
undesirable sabotaging of the elections.

In real-life
electronic elections where voters typically cast their ballots on certified computers in voting centers, the chances of hacking such computers and tampering with the software that they run are small. However, for full-proof security and as a countermeasure against dishonest voters that might manage to hack the voting system, we proceed to describe an MPC solution
that enables the talliers to verify the legality of each ballot, even though those ballots remain hidden from them.
We note that in case a ballot is found to be illegal, the talliers may reconstruct it (by means of interpolation from the shares of $D'$ talliers) and use the recovered ballot as proof of the voter's dishonesty.
The ability to construct such proofs, which could be used in legal proceedings against dishonest voters, might deter voters from attempting cheating
in the first place.

We proceed to explain how the talliers can verify the legality of the cast ballots in each of the two order-based rules. That validation is based on the characterizations of legal ballots as provided in Theorems \ref{thm1} and \ref{thm2} for {\sc Copeland} and {\sc Maximin}, respectively. Note that the talliers need only to verify conditions 1 and 4 (in either Theorem \ref{thm1} or \ref{thm2});
condition 2 needs no verification since the voters do not distribute shares of the diagonal entries, as those entries are known to be zero; and condition 3 needs no verification since the voters distribute shares only in the upper triangle and then the talliers use condition 3 in order to infer the lower triangle from the shared upper triangle.

\subsubsection{Verifying condition 1}\label{vercond1}
Consider the shares that a voter distributed in $\Gamma(Q)$, where $Q$ is his ballot matrix.
The talliers need to verify that each entry in the shared $\Gamma(Q)$
is either $1$ or $-1$ in {\sc Copeland}, or either $1$ or $0$ in {\sc Maximin}.
The verification is performed independently on each of the $M(M-1)/2$ entries of the shared upper triangle. A shared scalar $x$ is in $\{-1,1\}$ (resp. in $\{0,1\}$) iff $(x+1)\cdot (x-1)=0$ (resp. $x\cdot (x-1)=0$).
Hence, the talliers use their shares of $x$ in order to compute the product $(x+1)\cdot (x-1)$ for {\sc Copeland} or
$x \cdot (x-1)$ for {\sc Maximin}, using the procedure for evaluating arithmetic functions (as described in Section \ref{principles} and further discussed in Section \ref{mpc} in the Appendix).
If the computed results are zero for all $M(M-1)/2$ entries of 
$\Gamma(Q)$,
then $\Gamma(Q)$ satisfies condition 1 in Theorem \ref{thm1}
({\sc Copeland}) or in Theorem \ref{thm2} ({\sc Maximin}).
If, on the other hand, at least one of the results is nonzero,
then the ballot will be rejected.

\subsubsection{Verifying condition 4}

As the talliers received $D'$-out-of-$D$ shares of each entry in $\Gamma(Q)$, they can compute 
$D'$-out-of-$D$ shares of the corresponding 
row sums, 
$Q_m$, $m \in [M]$, as we proceed to show. 
In {\sc Copeland},
conditions 2 and 3 in Theorem \ref{thm1} imply that
\be Q_m = \sum_{m' \in [M]} Q(m,m') = \sum_{m' >m} Q(m,m') -
\sum_{m'<m} Q(m',m) \,. \label{qmcop}\ee
Since the talliers hold shares of $Q(m',m)$ for all $1 \leq m'<m \leq M$, they can use Eq. (\ref{qmcop}) and the linearity of secret sharing to compute shares of $Q_m$, $m \in [M]$. 
In {\sc Maximin}, on the other hand, conditions 2 and 3 in Theorem \ref{thm2}
imply that
\be Q_m = \sum_{m' >m} Q(m,m') -
\sum_{m'<m} Q(m',m) + (m-1)\,. \label{qmmax}\ee
Here too, the linearity of secret sharing
and the relation in Eq. (\ref{qmmax}) enable the talliers to compute shares of $Q_m$, $m \in [M]$, also in the case of {\sc Maximin}.

Now, it is necessary to verify that all $M$ values $Q_m$, $m \in [M]$,
are distinct. That condition can be verified by computing the product
\be F(Q):= \prod_{1 \leq m'<m \leq M} (Q_m - Q_{m'})  \,. \label{prodq}\ee 
Condition 4, in both rules, holds iff $F(Q) \neq 0$.
Hence, the talliers, who hold shares of $Q_m$, $m \in [M]$, may compute $F(Q)$ and then accept the ballot iff $F(Q) \neq 0$. 

\subsubsection{Privacy}\label{privacy}
A natural question that arises is whether the above described validation process poses a risk to the privacy of the voters.
In other words, a voter that casts a legal ballot wants to be ascertained that the validation process only reveals that the ballot is legal, while all other information is kept hidden from the talliers. We proceed to examine that question.

The procedure for verifying condition 1 in Theorems \ref{thm1} and \ref{thm2} offers perfect privacy for honest voters.
If the ballot $Q$ is legal then all computed values will be zero.
Hence, apart from the legality of the ballot, the talliers will not learn anything about the content of the ballot. 

The procedure for verifying condition 4
in Theorems \ref{thm1} and \ref{thm2}
offers {\em almost} perfect privacy, in the following sense.
If $Q$ is a valid ballot in {\sc Copeland}, then the ordered tuple
$(Q_1,\ldots,Q_M)$ is a permutation of
the ordered tuple
$(-M+1,-M+3,\ldots, M-3,M-1)$. This statement follows from the proof of condition 4 in Theorem \ref{thm1}.
Hence, as can be readily verified,
if $Q$ is a legal ballot then
the value of $F(Q)$, Eq. (\ref{prodq}), which the talliers compute in the validation procedure, equals
\be F(Q)= \pm 2^{\binom{M}{2}} \cdot \prod_{1 \leq m'<m \leq M} (m - m')\,, \label{prodq1}\ee
where the sign of the product in Eq. (\ref{prodq1}) 
is determined by the signature of $(Q_1,\ldots,Q_M)$ when viewed as a permutation of
the ordered tuple
$(-M+1,-M+3,\ldots, M-3,M-1)$.
Hence, since a ballot in {\sc Copeland} describes some ordering (permutation) of the candidates, the talliers will be able to infer the signature of that permutation, but nothing beyond that.
To eliminate even that negligible leakage of information, the talliers will simply compute $F(Q)^2$ instead of $F(Q)$. 

Similarly, the procedure for verifying condition 4 in Theorem \ref{thm2}, for {\sc Maximin}, is also privacy-preserving in the same manner. Indeed, in the case of {\sc Maximin}, 
$(Q_1,\ldots,Q_M)$ is a permutation of
the ordered tuple
$(0,1,\ldots,M-2,M-1)$,
and, therefore,
$$ F(Q)= \pm  \prod_{1 \leq m'<m \leq M} (m - m')\,, $$
where the sign of the above product 
is determined by the signature of $(Q_1,\ldots,Q_M)$ 
as a permutation of
$(0,1,\ldots,M-2,M-1)$. Here too, the talliers will recover $F(Q)^2$ instead of $F(Q)$ to ensure perfect privacy.

\subsection{Verifying the legality of the secret sharing}\label{validating2}
A malicious voter $V$ may attempt to sabotage the election by distributing to the $D$ talliers shares that do not correspond to a polynomial of degree $D'-1$.
Namely, if $\{s_1,\ldots,s_D\}$ are the shares that $V$ distributed to $T_1,\ldots,T_D$ in one of the entries in his ballot matrix, it is possible that there is no polynomial $g$ of degree (up to) $D'-1$ such that $g(d)=s_d$ for all $d \in [D]$.
By carefully selecting those shares, they may still pass the verification tests as described in Section \ref{validating} with some non-negligible probability, and then they would be integrated in the final computation of the winners. Since such shares do not correspond to a legal vote, they may sabotage the final computation of the winners (as we describe later on in Sections \ref{seccop} and \ref{secmax}).
To prevent such an attack, we explain herein how the talliers may detect it without learning anything on the submitted ballot
(beyond the mere legality of the secret sharing that was applied to it). 

Sub-protocol \ref{verify_ss} performs the desired testing. The talliers apply it on the secret shares $\{s_1,\ldots,s_D\}$ that they had received in any of the entries $s=Q(m',m)$ in an incoming ballot matrix $Q$.

First, each tallier $T_d$, $d \in [D]$, produces a random number $r_d$ and distributes to all talliers $D'$-out-of-$D$ shares of it (Lines 1-3). Then, each tallier $T_d$ adds to his share, $s_d$, the shares received from all talliers in the random numbers that they had produced, and broadcasts the result, denoted $\hat{s}_d$ (Lines 4-6).
The talliers proceed to find, by means of interpolation, a polynomial $f$ that satisfies $f(d)=\hat{s}_d$ for all $d \in [D]$ (Line 7). If the degree of the polynomial is $D'-1$ or less, the set of values $\{s_1,\ldots,s_D\}$ constitute a legal $D'$-out-of-$D$ sharing in some (still unknown) secret $s \in \ZZ_p$; otherwise, it is not (Lines 8-11).

\SetAlgorithmName{Sub-Protocol}{}{}
\begin{algorithm}[h!!!!]
\SetAlgoLined
\DontPrintSemicolon
    \SetKwInput{Input}{Input}
    \Input{$T_d$, $d \in [D]$, has $s_d$, a supposedly $D'$-out-of-$D$ share in some secret $s \in \ZZ_p$.}
    
\ForAll{$d \in [D]$}{

$T_d$ produces a random $r_d \in \ZZ_p$\\

$T_d$ distributes to all talliers $D'$-out-of-$D$ shares of $r_d$, denoted $r_{d,1},\ldots,r_{d,D}$\\
}

\ForAll{$d \in [D]$}{

$T_d$ computes $\hat{s}_d = s_d + \sum_{j \in [D]} r_{j,d}$\\

$T_d$ broadcasts $\hat{s}_d$\\

}

All talliers compute a polynomial $f$ of degree up to $D-1$ such that $f(d)=\hat{s}_d$, $d \in [D]$ \\

\If{$ \deg  f \leq D'-1$}{
{\bf Output} "A legal $D'$-out-of-$D$ sharing"\\
}
\Else{
{\bf Output} "An illegal $D'$-out-of-$D$ sharing"\\
}

\SetKwInOut{Output}{Output}
    \Output{The legality of the secret sharing operation.}
\caption{Verifying that the shares distributed in some scalar secret constitute a legal $D'$-out-of-$D$ sharing}
\label{verify_ss}
\end{algorithm}

\begin{lemma}
Sub-protocol \ref{verify_ss} is correct and fully preserves the voter's privacy.
\end{lemma}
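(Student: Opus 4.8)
The plan is to analyze the procedure one shared matrix entry at a time by following the single low‑degree polynomial that underlies the broadcast values, and to split the claim into its two halves: correctness of the accept/reject decision, and privacy of the ballot. Fix one entry of $V$'s ballot matrix and let $\{s_1,\dots,s_D\}$ be the shares $V$ distributed for it. Let $g$ be the unique polynomial of degree at most $D-1$ with $g(d)=s_d$ for all $d\in[D]$ (Lagrange interpolation through $D$ points); by the definition of Shamir's scheme in Section~\ref{ssharing}, the shares constitute a legal $D'$‑out‑of‑$D$ sharing precisely when $\deg g\le D'-1$. For each tallier $T_j$, $j\in[D]$, let $h_j$ be the degree‑$\le D'-1$ polynomial it used in Step~1, so that $r_{j,d}=h_j(d)$, and set $H:=\sum_{j\in[D]}h_j$, a polynomial of degree at most $D'-1$. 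Then Step~2 gives $\hat s_d=s_d+\sum_{j\in[D]}r_{j,d}=(g+H)(d)$ for every $d\in[D]$, and since $g+H$ has degree at most $D-1$ and agrees with $\hat s_{\bullet}$ at all $D$ points, the polynomial $f$ recovered in Step~4 is exactly $f=g+H$.

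\textbf{Correctness.} From $\deg H\le D'-1$ I would argue both implications. If $\deg g\le D'-1$ then $\deg f=\deg(g+H)\le D'-1$, so Step~5 accepts, which is the correct verdict. Conversely, if $\deg g\ge D'$, then the coefficients of $g$ in degrees $D',\dots,D-1$ are untouched by $H$, so $\deg f=\deg g\ge D'$ and Step~6 rejects, again correctly. Hence the procedure returns the right answer on every input.

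\textbf{Privacy.} Here I would invoke the honest‑majority assumption: any coalition of curious talliers is an unauthorized set, of size at most $D'-1$, so at least one summand of $H$, say $h_{j_0}$ with $T_{j_0}$ honest, is a fresh uniformly random polynomial of degree at most $D'-1$ about which the coalition sees only its own $\le D'-1$ shares. Consequently $H$, conditioned on the coalition's entire view prior to the broadcast, is uniformly random among the degree‑$\le D'-1$ polynomials that take the (at most $D'-1$) mask‑values the coalition already holds; in other words, $H$ acts as a one‑time‑pad mask on $g$ at the level of polynomials. Therefore, when $V$ is honest (so $\deg g\le D'-1$), the broadcast $f=g+H$ is distributed independently of $g$ except through the values $g(d)=s_d$ at the coalition's own evaluation points $d$, and those are exactly the ballot‑entry shares the coalition was given in the first place; equivalently, a simulator holding only the coalition's shares and the bit ``legal'' can reproduce the distribution of $\{\hat s_d\}_{d\in[D]}$. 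Thus $f$ leaks nothing about the ballot beyond the legality bit. (For a dishonest $V$ with $\deg g\ge D'$ the same computation shows only the high‑degree coefficients of $g$ become visible, which is harmless since that ballot is rejected.) Finally I would note that the argument runs in parallel and noninteractively over all $M(M-1)/2$ entries, so nothing cross‑entry leaks either.

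The step I expect to be the real obstacle is the privacy half, precisely the claim that $H$ hides $g$: one must lean on the choice $D'=\lfloor (D+1)/2\rfloor$ to ensure that every dishonest coalition is unauthorized, so the mask $H$ always contains a hidden, uniformly random honest summand, and one must be careful that the coalition already knows the $\le D'-1$ shares of $H$ it contributed to or received, so that the residual randomness is exactly one unknown degree‑$\le D'-1$ polynomial constrained at those points. The cleanest packaging is probably the explicit simulator sketched above; the correctness half, by contrast, is just degree bookkeeping on $f=g+H$.
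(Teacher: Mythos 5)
Your proof is correct and follows essentially the same route as the paper's: both hinge on the identity $f=g+\sum_j g_j$ (your $f=g+H$), deduce acceptance/rejection from the degree of $g$ since the mask has degree at most $D'-1$, and argue privacy from the randomness of the mask. The only difference is that your privacy half is argued more carefully than the paper's — you explicitly condition on the coalition's view and invoke the honest-majority bound to guarantee an unknown uniformly random honest summand in $H$, whereas the paper simply notes that $G(0)=g(0)+\sum_j r_j$ and the remaining coefficients are random.
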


\begin{proof}
Assume that $V$ is honest. Then for each entry $s=Q(m',m)$ in his ballot matrix 
there exists a polynomial $g$ of degree at most $D'-1$ such that $s_d=g(d)$, $d \in [D]$. Define
\be G:=g + \sum_{j \in [D]}g_j\,, \label{Gdef}\ee
where $g_j$, $j \in [D]$, is the polynomial of degree $D'-1$ that $T_j$ used for generating the secret shares of its random value $r_j$ (Line 3 in Sub-protocol \ref{verify_ss}). Then
\be 
\begin{aligned}
  G(d) = g(d)+\sum_{j \in [D]}g_j(d)= s_d+ \sum_{j \in [D]}r_{j,d} = \hat{s}_d\,, \qquad d \in [D]\,.\label{Geq}     
\end{aligned}    
\ee
Hence, the interpolating polynomial $f$ that the talliers compute in Line 7 coincides with $G$. As $G$ is a sum of polynomials of degree $D'-1$ at most, the validation in Lines 8-9 will pass successfully.
In that case, the talliers would learn $G(0)=g(0)+\sum_{j \in [D]} r_j$. Here, $g(0)$ is the secret entry in $V$'s ballot matrix and $r_j$ is the secret random value that $T_j$ had chosen, $j \in [D]$. Hence, $G(0)$ reveals no information on $g(0)$. Since all other coefficients in $G$ are also random numbers that are not related to $g(0)$, this validation procedure provides perfect privacy for the voter.

Assume next that $V$ had distributed illegal shares for one of his ballot matrix entries. Namely, $V$ had distributed shares $\{s_1,\ldots,s_D\}$ such that the minimum-degree polynomial $g$ that satisfies $g(d)=s_d$ for all $d \in [D]$ is of degree $t>D'-1$. In that case, the polynomial $f$ that the talliers compute in Line 7 would be
$f=g + \sum_{j \in [D]}g_j$. Since the degree of that polynomial is $t>D'-1$, they would reject the ballot (Lines 10-11). Hence, the verification procedure is correct and provides perfect privacy.
\end{proof}

When the talliers receive shares in some ballot matrix, they must first execute Sub-protocol \ref{verify_ss} for each of its entries, and only if that validation passes successfully they will proceed to perform the validation described in Section \ref{validating}.

Finally, we note that Lines 1-3 in Sub-protocol \ref{verify_ss} can be executed even before the
election period starts. Namely, once the number of registered voters, $N$, and the number of candidates, $M$, are determined, the talliers can produce $NM(M-1)/2$ random values and distribute shares of them to be used later on in masking the $M(M-1)/2$ entries in each voter's ballot matrix.  
 
\subsection{An MPC computation of the winners in the {\sc Copeland} rule}\label{seccop}
The parameter $\alpha$ in Eq. (\ref{copeland}) is always a rational number; typical settings of $\alpha$ are $0$, $1$, or $\frac{1}{2}$
~\citep{faliszewski2009llull}.
Assume that $\alpha=\frac{s}{t}$ for some integers $s$ and $t$. Then
\be t\cdot \bw(m) = t \cdot \sum_{m' \in [M] \setminus \{m\} } 1_{ P(m,m')>0 } +
s \cdot \sum_{m' \in [M] \setminus \{m\}} 1_{ P(m,m')=0 }\,. \label{tbw}
\ee
The expression in Eq. (\ref{tbw}) involves all entries in $P$ outside the diagonal. However, the talliers hold $D'$-out-of-$D$ shares, denoted $G_d(m,m')$, $d \in [D]$, in $P(m,m')$ only for entries above the diagonal, $1 \leq m < m' \leq M$ (see Lines 9-10 in Protocol \ref{algvote}).
Hence, we first
translate Eq. (\ref{tbw}) into an equivalent expression that involves only entries in $P$ above the diagonal.
Condition 3 in Theorem \ref{thm1}, together with Eq. (\ref{pdef}), imply that
$P(m',m)=-P(m,m')$.
Hence, for all $m'<m$, we can replace $1_{ P(m,m')=0 }$ with $1_{ P(m',m)=0 } $, while
$1_{ P(m,m')>0 }$ can be replaced with
$1_{ -P(m',m)>0 }$. Hence,
\begin{equation} 
\begin{aligned}
    t\cdot \bw(m) &= \\
    & t \cdot \left\{ \sum_{m' >m } 1_{ P(m,m')>0 } + \sum_{m' <m } 1_{ -P(m',m)>0 } \right\}  +  \\
    & s \cdot \left\{ \sum_{m' >m } 1_{ P(m,m')=0 } +
    \sum_{m' <m } 1_{ P(m',m)=0 } \right\}  
\end{aligned}
\label{tbw1}
\end{equation}
Eq. (\ref{tbw1}) expresses the score of candidate $C_m$, re-scaled by a factor of $t$, only by entries in $P$ above the diagonal, in which the talliers hold $D'$-out-of-$D$ secret shares. 

In view of the above, the talliers may begin by computing secret shares of the bits of positivity in the first sum on the right-hand side of Eq. (\ref{tbw1}), by using the MPC sub-protocol described in Section \ref{positivity}. As for the bits of equality to zero in the second sum on the right-hand side of Eq. (\ref{tbw1}), the talliers can compute secret shares of them using the 
MPC sub-protocol described in Section \ref{is-zero}.
As the value of $t\cdot \bw(m)$ is a linear combination of those bits, the talliers can then use the secret shares of those bits and Eq. (\ref{tbw1}) in order to get secret shares of $t\cdot \bw(m)$, for each of the candidates, $C_m$, $m \in [M]$.

Next, they perform secure comparisons among the values $t \bw(m)$, $m \in [M]$, in order to find the $K$ candidates with the highest scores.
To do that, they need to perform $M-1$ secure comparisons (as described in Section \ref{principles} and then further discussed in Section \ref{secure_comparison}) in order to find the candidate with the highest score, $M-2$ additional comparisons to find the next one, and so forth down to $M-K$ comparisons in order to find the $K$th winning candidate. Namely, the overall number of comparisons in this final stage is 
$$\sum_{m=M-K}^{M-1}m = K\cdot\left( M-\frac{K+1}{2}\right)\,.$$
The above number is bounded by $M(M-1)/2$ for all $K <M$.
Once this computational task is concluded, the talliers publish the indices of the $K$ winners (Line 11 in Protocol \ref{algvote})).

We summarize the above described computation in Sub-protocol \ref{prot-winners}, which is an implementation of Line 11 in Protocol \ref{algvote}. It assumes that the talliers
hold
$D'$-out-of-$D$ secret shares of $P(m,m')$ for all $1 \leq m<m' \leq M$. Indeed, that computation has already taken place in Lines 9-10 of Protocol \ref{algvote}. Sub-protocol \ref{prot-winners} starts with a computation of $D'$-out-of-$D$ shares of all of the positivity bits and equality to zero bits that relate to the entries above the diagonal in $P$ (Lines 1-4). Then, in Lines 5-7,
the talliers use those shares in order to obtain $D'$-out-of-$D$ shares of $t \cdot \bw(m)$ for each of the candidates, using Eq. (\ref{tbw1}); the $D'$-out-of-$D$ shares of $t \cdot \bw(m)$ are denoted $\{w_d(m): d \in [D]\}$. Finally, using the secure comparison sub-protocol, they find the $K$ winners
(Lines 8-10).

\SetAlgorithmName{Sub-Protocol}{}{}
\begin{algorithm}[htb]
\SetAlgoLined
\DontPrintSemicolon
    \SetKwInput{Input}{Input}
    \Input{$T_d$, $d \in [D]$, has $G_d(m,m')$ (a share of $P(m,m')$) for all $1 \leq m<m' \leq M$.}
\ForAll{$1 \leq m < m' \leq M$}{
The talliers apply the positivity sub-protocol to translate $\{G_d(m,m'): d \in [D]\}$ into shares $\{\sigma^+_d(m,m'): d \in [D]\}$ in $1_{P(m,m')>0}$;

The talliers apply the positivity sub-protocol to translate $\{G_d(m,m'): d \in [D]\}$ into shares $\{\sigma^-_d(m,m'): d \in [D]\}$ in $1_{-P(m,m')>0}$;

The talliers apply the equality to zero sub-protocol to translate $\{G_d(m,m'): d \in [D]\}$ into shares $\{\sigma^0_d(m,m'): d \in [D]\}$ in $1_{P(m,m')=0}$;
}
\ForAll{$d \in [D]$}{
\ForAll{$m \in [M]$}{
$T_d$ computes $w_d(m) = t \cdot \left\{ \sum_{m' >m } \sigma^+_d(m,m')
  +\sum_{m' <m } \sigma^-_d(m',m)  \right\}+$  $
s \cdot \left\{ \sum_{m' >m } \sigma^0_d(m,m') +
 \sum_{m' <m } \sigma^0_d(m',m) \right\}
$;
}
}
\ForAll{$k \in [K]:=\{1,\ldots,K\}$}{
The talliers perform $M-k$ invocations of the secure comparison sub-protocol over the $M-k+1$ candidates in $\bC$ in order to find the $k$th elected candidate;

The talliers output the candidate that was found and remove him from $\bC$;
}

\SetKwInOut{Output}{Output}
    \Output{The $K$ winning candidates from $\bC$.}
\caption{Determining the winners in {\sc Copeland} rule}
\label{prot-winners}
\end{algorithm}

\subsubsection{Privacy}\label{priv1}
The talliers hold $D'$-out-of-$D$ shares of each of the ballot matrices, $P_n$, $n \in [N]$,
as well as in the aggregated ballot matrix $P$.
Under our assumption of honest majority, and our setting of $D'= \lfloor (D+1)/2 \rfloor$, the talliers cannot recover any entry in any of the ballot matrices nor in the aggregated ballot matrix.
In computing the final election results, they input the shares that they hold into
the secure comparison sub-protocol, the positivity testing sub-protocol, or the sub-protocol that tests equality to zero (Sections \ref{secure_comparison}--\ref{is-zero}). The secure comparison sub-protocol is perfectly secure, as was shown in~\citep{NO07}. The positivity testing sub-protocol that we presented here is just an implementation of one component from the secure comparison sub-protocol, hence it is also perfectly secure. Finally, the testing of equality to zero invokes the
protocol for evaluating arithmetic functions of~\citep{DN07} and~\citep{ChidaGHIKLN18}, which was shown there to be secure.
Hence, Sub-protocol \ref{prot-winners} is perfectly secure.

\subsection{An MPC computation of the winners in the {\sc Maximin} rule}\label{secmax}
Fixing $m \in [M]$, the talliers need first to find the index $m' \neq m$ which minimizes $P(m,m')$ (where $P$ is, as before, the sum of all ballot matrices, see Eq. (\ref{pdef})). Once $m'$ is found then
$\bw(m)=P(m,m')$.
To do that (finding a minimum among $M-1$ values), the talliers need to perform $M-2$ secure comparisons. That means an overall number of $M(M-2)$ secure comparisons for the first stage in the talliers' computation of the final results (namely, the computation of the scores for all candidates under the {\sc Maximin} rule).
The second stage is just like in {\sc Copeland}, namely, finding the indices of the $K$ candidates with the highest $\bw$ scores.
As analyzed earlier, that task requires an invocation of the secure comparison sub-protocol at most $M(M-1)/2$ times. Namely, the determination of the winners in the case of {\sc Maximin} requires performing the comparison sub-protocol less than $1.5 M^2$ times.

The above described computation maintains the privacy of the voters, as argued in Section \ref{priv1}.

\subsection{A lower bound on the field's size}\label{ppp}
Here we comment on the requirements of our protocol regarding the size $p$ of the underlying field $\ZZ_p$.

The prime $p$ should be selected to be greater than the following four values:

(i) $D$, as that is the number of talliers (see Section \ref{ssharing}). 

(ii) $2N$, since the field should be large enough to hold the entries of the
    sum $P$ of all ballot matrices,
    Eq. (\ref{pdef}), and the entries of that matrix are confined to the range $[-N,N]$. 
    
(iii) $\max\{t,s\} \cdot (M-1)$, since that is the upper bound on $t \cdot \bw(m)$, see Eq. (\ref{tbw}), which is secret-shared among the talliers.

(iv) $2(M-1)$, since in validating a given ballot matrix $Q$, the talliers need to test the equality to zero of $F(Q)$, see Eq. (\ref{prodq}). As $F(Q)$ is a product of the differences $Q_m-Q_{m'}$, and each of those differences can be at most $2(M-1)$ (in {\sc Copeland}) or $M-1$ (in {\sc Maximin}), it is necessary to set $p$ to be larger than that maximal value.
    
Hence, in summary, $p$ should be selected to be larger than each of the above four values. Since $D$ (number of talliers), $M$ (number of candidates), and $s$ and $t$ (the numerator and denominator in the coefficient $\alpha$ in {\sc Copeland} rule, Eq. (\ref{copeland})), are typically much smaller than $N$, the number of voters, the essential lower bound on $p$ is $2N$.
In our evaluation, we selected $p=2^{31}-1$, which is sufficiently large for any conceivable election scenario.

\subsection{The security of the protocol}\label{overallsecurity}
An important goal of secure voting is to provide anonymity; namely, it should be impossible to connect
a ballot to the voter who cast it.
Protocol \ref{algvote} achieves that goal, and beyond. 
Indeed, each cast ballot is distributed into $D'$-out-of-$D$ random shares and then each share is designated to a unique
tallier.
Under the {\em honest majority} assumption, the voters' privacy is perfectly preserved. Namely, unless at least $D' \geq D/2$ talliers betray the trust vested in them and collude, the ballots remain secret. Therefore, not only that a ballot cannot be connected to a voter,
even the content of the ballots is never exposed, and not even the aggregated ballot matrix. The only information that anyone learns at the end of Protocol \ref{algvote}'s execution is the final election results. This is a level of privacy that exceeds mere anonymity.

A scenario in which at least half of the talliers collude
is highly improbable, and its probability decreases as $D$ increases. 
Ideally, the talliers would be parties that enjoy a high level of trust within the organization or state in which the elections take place, and whose business is based on such trust. Betraying that trust may incur devastating consequences for the talliers. Hence, even if $D$ is set to a low value such as $D=5$ or even $D=3$,
the probability of a privacy breach (namely, that $D'= \lfloor (D+1)/2 \rfloor$ talliers choose to betray the trust vested in them) would be small.

Instead of using a $D'$-out-of-$D$ threshold secret sharing, we could use an additive ``All or Nothing" secret sharing scheme, in which {\bf all} $D$ shares of all $D$ talliers are needed in order to reconstruct the shared secrets and use suitable MPC protocols for multiplication and comparison. The necessary  cryptographic machinery for such $D$-out-of-$D$ secret sharing was already developed in
\cite{BogdanovLW08}, so our protocols can be readily converted to such a setting.
That approach would result in higher security since the privacy of the voters would be jeopardized only if all $D$ talliers betray the trust vested in them and not just $D'$ out of them. However, such a scheme is not robust: if even a single tallier 
is attacked or becomes dysfunctional for whatever reason,
then all ballots would be lost. Such a risk is utterly unacceptable in voting systems. Our protocols, on the other hand, can withstand the loss of $D-D'$ talliers. 

In view of the above discussion, the tradeoff in setting the number of talliers $D$ is clear: higher values of $D$ provide higher security since more talliers would need to be corrupted in order to breach the system's security. Higher values of $D$ also provide higher robustness, since the system can withstand higher numbers of tallier failures. However, increasing $D$ has its costs: more independent and reputable talliers are needed, and the computational costs of our protocols increase, albeit modestly (see Section \ref{costanalysis}).

We would like to stress that our protocols focus on securing the computation of the election results. Needless to say that those protocols must be integrated into a comprehensive system that takes care of other aspects of voting systems.
For example, it is essential to guarantee that only registered voters can vote and that each one can vote just once. It is possible to ensure such conditions by standard means. Another requirement is the need to prevent attacks of malicious adversaries. In the context of our protocols, an adversary may eavesdrop on the communication link between some voter $V_n$ and at least $D'$ of the talliers, and intercept the messages that $V_n$ sends to them (in Protocol \ref{algvote}'s Line 7) in order to recover $V_n$'s ballot. That adversary may also replace
$V_n$'s original messages to all $D$ talliers with other messages (say, ones that carry shares of a ballot that reflects the adversary's preferences).
Such attacks can be easily thwarted by requiring each party (a voter or a tallier) to
have a certified public key,
encrypt each message that he sends out using the receiver's public key and then sign it using his own private key;
also, when receiving messages, each party must first verify them using the public key of the sender and then send a suitable
message of confirmation
to the sender. 
Namely, each message that a voter $V_n$ sends to a tallier $T_d$ in Line 7 of Protocol \ref{algvote} should be signed with $V_n$'s private key and then encrypted by $T_d$'s public key; and $T_d$ must acknowledge its receipt and verification.

\section{Order-based voting with ties}\label{ties}
So far we concentrated on the case in which the voters submit a strict ordering of the candidates, see Section \ref{formaldef}. However, order-based rules allow voters to set ties between candidates. For example, if there are six candidates, $C_1,\ldots,C_6$, a voter may identify $C_1$ as her most preferred candidate, $C_2,C_3,C_4$ as her second-tier candidates, where she is indifferent between them, and $C_5,C_6$ as her least preferred candidates. In that case, her ballot matrix would be
\be Q= \left( \begin{array}{rrrrrr} 0 & 1 & 1 & 1 & 1 & 1 \\ -1 & 0 & 0 & 0 & 1 & 1 \\
 -1 & 0 & 0 & 0 & 1 & 1 \\ -1 & 0 & 0 & 0 & 1 & 1 \\
 -1 & -1 & -1 & -1 & 0 & 0 \\  -1 & -1 & -1 & -1 & 0 & 0
\end{array} \right) \,. \label{exmp6}\ee
The aggregation of ballots and the determination of the election results is carried out as described earlier: the score $\bw(m)$ for each candidate $C_m$ is computed by Eqs. (\ref{pdef})+(\ref{copeland}), and then the winners are those with the highest scores.

The challenge that allowing ties brings about is in the characterization of legal ballot matrices and in the design of an MPC protocol for securely verifying the legality of a voter's shared ballot matrix (in the sense that it corresponds to an ordering of the candidates with possible ties). That is what we do in Sections \ref{cmties} and \ref{validatingties} below.
Our focus in those sections is the {\sc Copeland} rule. The
{\sc Maximin} rule with ties is discussed in Section \ref{maximinties}.

\subsection{Characterizing legal ballot matrices when ties are allowed}\label{cmties}

Theorem \ref{thm1ties} characterizes legal ballot matrices in the {\sc Copeland} rule when ties are allowed. It is a generalization of Theorem \ref{thm1}.

\begin{theorem}\label{thm1ties}
An $M \times M$ matrix $Q$ is a valid ballot under the {\sc Copeland} rule with ties iff it satisfies the following conditions:
\begin{enumerate}
    \item $Q(m,m') \in \{-1,0,1\}$ for all $1 \leq m < m' \leq M$;
    \item $Q(m,m)=0$ for all $m \in [M]$;
    \item $Q(m',m)+Q(m,m')=0$ for all $1 \leq m < m' \leq M$;
    \item If $Q(m',m)=0$ then $Q(m,\cdot)=Q(m',\cdot)$.
    \item For all $m \in [M]$ set $\eta_m=1$ if $Q(m',m) \neq 0$ for all $m'<m$, and $\eta_m=0$ otherwise.
    Define $Q_m:= \sum_{m' \in [M]} \eta_{m'} \cdot Q(m,m')$. Then $Q_{m'} \neq Q_m$
    for all $1 \leq m'<m \leq M$ such that $\eta_{m'}=\eta_m=1$.
\end{enumerate}
\end{theorem}

The differences between Theorem \ref{thm1} (no ties) and Theorem \ref{thm1ties} (ties allowed) are in the first condition (where in the latter theorem zero entries are allowed also in non-diagonal entries), condition 4 (that holds trivially in Theorem \ref{thm1}, as there are no zeroes outside the diagonal), and in condition 5 that generalizes condition 4 in Theorem \ref{thm1} (because when there are no ties we have $\eta_m=1$ for all $m \in [M]$).

\smallskip
Before proving the theorem, we discuss conditions 4 and 5 and exemplify their necessity.
The rational behind condition 4 is as follows. If $Q(m,m')=0$ then candidates $C_m$ and $C_{m'}$ are equivalent for the voter, and hence their relative ranking with respect to each of the other candidates (as manifested by their corresponding rows in the matrix) must be the same.
As for condition 5, let us first define the relation
$\sim$ on $[M]$ as follows: $ m \sim m'$ iff $C_m$ and $C_{m'}$ are in a tie (namely, $Q(m',m)=0$). That is obviously an equivalence relation. Let $[M] / \sim$ denote the quotient set, and
$k:= | [M] / \sim |$ be the number of equivalence classes, i.e., the number of distinct tiers in the ranking. In what follows we represent each equivalence class by the minimal index in it.
Then condition 5 considers the reduced $k \times k$ matrix over the set of representative indices, and it restates condition 4 from Theorem \ref{thm1}, namely, that all row sums in that reduced (and tie-free) matrix must be distinct.
For the sake of illustration, the ballot matrix $Q$ in Eq. (\ref{exmp6}) induces $k=3$ equivalence classes, $ [M]/\sim = \{ \{1\}, \{2,3,4\},\{5,6\}\}$. The representative indices are $1,2,5$ and the reduced $3 \times 3$ matrix is the sub-matrix of $Q$ that consists of its 1st, 2nd, and 5th rows and columns,
$$ Q|_{1,2,5}= \left( \begin{array}{rrr} 0 & 1 & 1  \\ -1 & 0 & 1 \\
 -1 & -1 &  0
\end{array} \right) \,. $$
Indeed, the row sums in that reduced matrix are distinct.

Next, we give examples to matrices that comply with conditions 1-3 of Theorem \ref{thm1ties}, but violate condition 4 or 5:
$$ Q_1 = \left( \begin{array}{rrrr} 0 & 1 & 1 & 1  \\ -1 & 0 & 0 &  1 \\
 -1 & 0 & 0 & -1 \\ -1 & -1 & 1 & 0  
\end{array} \right) ~,~~~
Q_2 = \left( \begin{array}{rrrr} 0 & 1 & -1 & -1  \\ -1 & 0 & 1 &  1 \\
 1 & -1 & 0 & 0 \\ 1 & -1 & 0 & 0  
\end{array} \right)
\,. $$
$Q_1$ violates condition 4: on one hand, $Q_1(2,3)=Q_1(3,2)=0$, which means that $C_2$ and $C_3$ are in a tie, but the corresponding rows differ, since $Q_1(2,4)=1$ while $Q_1(3,4)=-1$. However, there can be no ranking of the candidates in which $C_4$ is ranked lower than $C_2$ and higher than $C_3$, while at the same time $C_2$ and $C_3$ are in a tie.
As for $Q_2$, it does comply with condition 4: indeed, there are two candidates in a tie, $C_3$ and $C_4$, and their corresponding rows are equal. But $Q_2$ violates condition 5: indeed, while $\eta_1=\eta_2=\eta_3=1$ (and $\eta_4=0$) we have $Q_1=Q_2=Q_3=0$, in contradiction with condition 5 that requires those row sums to be distinct. That equality of the row sums is the result of the circular (and hence illegal) ranking between the candidates: $C_1>C_2>C_3=C_4>C_1$.

We now proceed to prove Theorem \ref{thm1ties}.

\begin{proof}
Assume that $Q$ is a legal {\sc Copeland} ballot matrix with ties; namely, it is a matrix that describes some given ranking over $[M]$. Then it clearly satisfies conditions 1-3. As for condition 4, it merely states that two candidates that are in a tie ($Q(m',m)=0$) must have the same relative ranking vis-a-vis all candidates, hence it clearly holds.
Finally, the row sums $Q_m$ in condition 5 are the row sums in the reduced ballot matrix over $[M]/\sim$, where $\sim$ is as defined earlier ($m \sim m'$ if $C_m$ and $C_{m'}$ are in a tie in the underlying ranking). Since there are no ties over $[M]/\sim$,
those row sums must be distinct as implied by Theorem \ref{thm1}.

Assume now that $Q$ is an $M \times M$ matrix that complies with conditions 1-5. We proceed to show that it induces a ranking over $[M]$. First, we say that $m \sim m'$, for $m,m' \in [M]$, if $Q(m',m)=0$. Clearly, that relation is reflexive (condition 2), symmetric (condition 3), and transitive (as implied by condition 4). Hence, it is an equivalence. Let us represent each equivalence class by the minimal index in it, and let $[M]/\sim$ denote the set of representative indices and $k := | [M]/\sim|$ denote their number. Then $\eta_m=1$ iff $m \in [M]/\sim$. Let $Q'$ be the reduced $k \times k$ matrix that is obtained from $Q$ by retaining only rows and columns of indices $m \in [M]/\sim$. Then $Q'$ is a matrix that complies with all four conditions in Theorem \ref{thm1} (where condition 4 in Theorem \ref{thm1} follows from condition 5 herein). Hence, $Q'$ induces a ranking (with no ties) over $[M]/\sim$.
It follows that $Q$ induces a ranking with ties over $[M]$ by adding each index in $[M] \setminus ([M]/\sim)$ to the ranking over $[M]/\sim$, next to the index in $[M]/\sim$ that is equivalent to it, with a tie between them.
The resulting ranking with ties over $[M]$ is consistent with $Q$.   
\end{proof}

\subsection{Verifying the legality of the cast ballots when ties are allowed}\label{validatingties}
Recall that if a voter's ballot matrix is $Q$ then it suffices to distribute shares only in its upper triangle $\Gamma(Q)$, Eq. (\ref{GammaDef}), see Section \ref{legal}. Herein we explain how to verify the conditions of Theorem \ref{thm1ties}. As before, conditions 2 and 3 do not need to be verified since the talliers complete $\Gamma(Q)$ into a full ballot matrix $Q$ in accordance with those anti-symmetry conditions.

The talliers can verify condition 1 if for all  $1 \leq m' < m \leq M$ they compute shares of
$$ x:= Q(m',m) \cdot (Q(m',m)+1) \cdot (Q(m',m)-1) \,,$$
using the techniques described in Section \ref{mpc}, recover $x$ and verify that $x=0$.
If all those $M(M-1)/2$ tests pass successfully then $Q$ complies with condition 1 and no other information on $Q$ is revealed.

For the verification of conditions 4 and 5 the talliers compute for each
$Q(m',m)$, $1 \leq m' < m \leq M$, shares of $\xi_{m',m} := 1_{Q(m',m)=0}$, where the latter computation is carried out as described in Section \ref{is-zero}.
After that preliminary computation, the talliers compute shares of
$$ \pi_{m',m,k}:=\xi_{m',m} \cdot \left(  Q(m',k)-Q(m,k)\right)\,, 
~~~1 \leq m' < m \leq M\,,~~ k \in [M]\,, $$
as described in Section \ref{mpc}, and then recover $\pi_{m',m,k}$.
Condition 4 is verified iff
\be \pi_{m',m,k} = 0 \,, 
~~~1 \leq m' < m \leq M\,,~~ k \in [M]\,. \label{pimkzero}\ee
Indeed, $\pi_{m',m,k} = 0 $ iff either $\xi_{m',m}=0$ or
$Q(m',k)=Q(m,k)$ for all $k \in [M]$.
Since $\xi_{m',m}=0$ iff $Q(m',m) \neq 0$, Eq. (\ref{pimkzero}) is equivalent to condition 4 in Theorem \ref{thm1ties}. Hence, a matrix $Q$ passes that test iff it complies with condition 4, and if it does, no further information on it is leaked as a result of that verification.
Note that the talliers do not hold shares of $Q(k,m)$ whenever $k \geq m$. Hence, in such cases the talliers rely on conditions 2 and 3 and make the following substitutions: when $k=m$ they set $Q(k,m)=0$, and when $k>m$ they set $Q(k,m) = -Q(m,k)$.

Finally, we turn to the verification of condition 5. Here, the talliers compute for all $m \in [M]$ shares of
\be \eta_m := \prod_{m'=1}^{m-1} \left( 1 - \xi_{m',m}\right) \,. \label{etadef}\ee
It can be easily verified that $\eta_m = 1$ if $Q(m',m) \neq 0$ for all $1 \leq m' <m$, while $\eta_m=0$ otherwise.
Hence, the indicator variables $\eta_m$ as defined above equal the indicator variables $\eta_m$ that are defined in Theorem \ref{thm1ties}'s condition 5. The computation in Eq. (\ref{etadef}) is carried out by the method described in Section \ref{mpc}. 
Next, the talliers compute shares of
\be Q_m = \sum_{m' \in [M]} \eta_{m'} \cdot Q(m,m') \label{Qities}\,,\ee
where, as before, we recall that $Q(m,m)=0$ and when $m'>m$ we rely on the equality $Q(m',m)=-Q(m,m')$. Condition 5 requires that for all $1 \leq m'<m\leq M$, whenever $\eta_m \eta_{m'} \neq 0$ then also $Q_{m'}-Q_m \neq 0$.
That condition can be rephrased as follows:
\be \gamma_{m',m} :=(1-\eta_m\eta_{m'} ) + \eta_m\eta_{m'} \cdot ( Q_{m'}-Q_m) \neq 0 \,,~~~ 
 1 \leq m'<m\leq M \,.  \label{gammamm}\ee
Indeed, if $\eta_m=0$ or $\eta_{m'} =0$, a case in which condition 5 is void, then the first addend in Eq. (\ref{gammamm}) equals 1 while the second addend is zero, so $\gamma_{m',m}=1$. If, on the other hand,
$\eta_m\eta_{m'}=1$, then condition 5 requires the difference 
$Q_{m'}-Q_m$ to be nonzero, and since in that case $\gamma_{m',m}=Q_{m'}-Q_m$, the inequalities in Eq. (\ref{gammamm}) are indeed equivalent to condition 5.

The talliers could compute shares of $\gamma_{m',m}$, using the techniques described in Section \ref{mpc}, and then recover $\gamma_{m',m}$ and check that it is nonzero. Alas, the value of $\gamma_{m',m}$ will reveal information on the ballot matrix. To refrain from such leakage of information, the talliers produce offline (even prior to the election period) a large pool of shared nonzero random values from the underlying field $\ZZ_p$. Then, for each value $\gamma_{m',m}$ they would pull a fresh nonzero random value $r$ from that pool and then recover $r \gamma_{m',m}$, using the shares that they had computed in $\gamma_{m',m}$, the shares that they had computed offline in $r$, and the multiplication procedure from Section \ref{mpc}.
Since $r \gamma_{m',m} \neq 0$ iff $\gamma_{m',m} \neq 0$, the talliers can verify condition 5 without learning any further information on $Q$. 

We summarize the verification procedure in Sub-protocol \ref{verify}. In Lines 1-5 the talliers complete $\Gamma(Q)$ into a full ballot matrix $Q$, based on $Q$'s asymmetry. 
In Lines 6-10 they verify condition 1; in Lines 11-17 they verify condition 4; and in Lines 18-29 they verify condition 5. In any event of detecting that the ballot is illegal, the sub-protocol outputs a message about that illegality and aborts (Lines 10, 17, 29). If the sub-protocol reaches the end, it outputs
a message about the legality of the input ballot matrix and then terminates (Line 30). Note that the computation in Line 25 is independent of the input and can be carried out offline, even before the election period had started.

\subsection{The {\sc Maximin} rule with ties}\label{maximinties}
As discussed in Section \ref{formaldef}, the ballot matrix in {\sc Maximin} consists of 0,1 entries, where the $(m,m')$-th entry in the matrix equals 1 iff $C_m$ is ranked strictly higher than $C_{m'}$ in the voter's ranking. Instead of formalizing a characterization and designing a corresponding verification MPC protocol for such matrices, we propose that also in
{\sc Maximin} rule, just like in {\sc Copeland}, the voters will submit a matrix of $\{-1,0,1\}$-entries that fully spells out their ranking. The talliers can then verify the legality of that matrix by running Sub-protocol \ref{verify}. After the ballot matrix is verified to be legal, the talliers may easily translate it to a {\sc Maximin} ballot matrix of $\{0,1\}$-entries, as described in Section \ref{formaldef}, by running the following computation on each of the matrix entries, $x = Q(m,m')$, $1 \leq m \neq m' \leq M$:
\be x \leftarrow x + 1_{x+1=0} \,. \label{xupd}\ee
The computation in Eq. (\ref{xupd}) involves an invocation of the equality to zero protocol of Section \ref{is-zero}.
It is easy to see that the computation in Eq. (\ref{xupd}) 
substitutes (shares of) $x=-1$ with (shares of) $x=0$, but leaves (the shares of) $x$ unchanged if $x \neq -1$. After that update, the ballot matrix (that was already verified to be a legal {\sc Copeland} ballot matrix) will be a legal {\sc Maximin} matrix, consisting only of $0,1$ entries, and the aggregation of all ballots and the computation of the election results will be as described in 
Section \ref{secmax}.

\SetAlgorithmName{Sub-Protocol}{}{}
\begin{algorithm}[h!!!!]
\SetAlgoLined
\DontPrintSemicolon
    \SetKwInput{Input}{Input}
    \Input{$T_d$, $d \in [D]$, has $Q_d(m',m)$ (a share of $Q(m',m)$, where $Q$ is a ballot matrix of some voter) for all $1 \leq m'<m \leq M$.}
\ForAll{$d \in [D]$}{

\ForAll{$1 \leq m \leq M$}{

$T_d$ sets $Q_d(m,m) \leftarrow 0$ \\

\ForAll{$m<m'\leq M$}{

$T_d$ sets $Q_d(m',m) \leftarrow -Q_d(m,m')$ \\

}
}
}

\ForAll{$1 \leq m' < m \leq M$}{
Compute shares of $x:=Q(m',m) \cdot (Q(m',m)+1) \cdot (Q(m',m)-1) $ \\

Recover $x$ \\

\If{$x \neq 0$}
{
{\bf Output} "Illegal ballot" and {\bf Abort}
}
}

\ForAll{$1 \leq m' < m \leq M$}{

Compute shares of $\xi_{m',m}:=1_{Q(m',m)=0}$ \\

\ForAll{$1 \leq k \leq M$}{

Compute shares of $ \pi_{m',m,k}:=\xi_{m',m} \cdot \left(  Q(m',k)-Q(m,k)\right)$ \\

Recover $ \pi_{m',m,k}$ \\

\If{$\pi_{m',m,k} \neq 0$}
{
{\bf Output} "Illegal ballot" and {\bf Abort}
}
}
}

\ForAll{$1 \leq m \leq M$}{

Compute shares of $\eta_m := \prod_{m'=1}^{m-1} ( 1 - \xi_{m',m}) $ \\

}

\ForAll{$1 \leq m \leq M$}{

Compute shares of $Q_m:= \sum_{m'=1}^M \eta_{m'} \cdot Q(m,m')$ \\

}

\ForAll{$1 \leq m'<m \leq M$}{

Compute shares of $\eta_m\eta_{m'}$ \\

Compute shares of $\gamma_{m',m} :=(1-\eta_m\eta_{m'} ) + \eta_m\eta_{m'} \cdot ( Q_{m'}-Q_m)$ \\

Compute shares of a nonzero random $r \in \ZZ_p$ \\

Compute shares of $x:=r \cdot \gamma_{m',m}$ \\

Recover $x$ \\

\If{$x= 0$}
{
{\bf Output} "The ballot is illegal" and {\bf Abort}
}

}

{\bf Output} "The ballot is legal" and {\bf Terminate}

\SetKwInOut{Output}{Output}
    \Output{The legality of the input ballot matrix $Q$.}
\caption{Verifying the legality of a ballot matrix in {\sc Copeland} rule with ties}
\label{verify}
\end{algorithm}

\section{An implementation of a secure order-based voting system}\label{demo}
Our goal herein is to establish the practicality of our protocol and demonstrate it in action. 
To that end, we had implemented a demo of a voting system based on the protocols that we presented here for the {\sc Copeland} and {\sc Maximin} voting rules. 
The full source code of the demo is open source and is available on GitHub (\url{https://github.com/arthurzam/SecureVoting}). The demo is available at \url{https://securevoting.ddns.net/}. Interested readers can use it to define a new election (by setting all the necessary parameters) and invite voters to cast their votes and then compute the final election results.

The system is implemented in {\sf Python}. We chose that programming language for ease of development and best cross-platform support. In addition, it makes the code easier to read, verify, and modify as required. 

In our system there are users (clients) and talliers (servers). A user can be either an election manager, namely a party that initiates a new election and defines its characteristics, or a voter in an election that was initiated and is managed by another user. As for the talliers, even though our code can be executed with any number of talliers, we focus in our demo on the case of three talliers.

The demo uses {\sf docker} and {\sf docker-compose} to easily deploy on servers. 
The deployment includes a nearly-static Web server (which supplies the user interface), a {\sf PostgreSQL} database (holding details on all currently defined elections, either future ones, on-going elections, or complete elections, and on all voters), and the tallier module. A high-level block diagram of the demo is given in Figure \ref{fig: block-diagram}. For the sake of simplicity, the figure illustrates only two talliers; the structure of all talliers, and the communication between each tallier and the client and between each pair of talliers is as shown in the figure for Talliers 1 and 2.

\begin{figure*}[ht!]
    \begin{center}
    \begin{tikzpicture}[y=1cm, x=1cm, yscale=0.75,xscale=0.75, every node/.append style={scale=0.75}, inner sep=0pt, outer sep=0pt]
  \path[draw=black] (6.8792, 9.0223) rectangle (14.6844, 5.0535);
    \node[text=black,anchor=south] (text6614) at (10.795, 8.5196){\textbf{Tallier 1}};
  \path[draw=black,rounded corners=0.3969cm] (7.4083, 8.2285) rectangle (10.5833, 5.5827);
    \node[text=black,anchor=south] (text8851) at (8.9958, 7.7788){Docker};
  \path[draw=black,miter limit=10.0] (7.9375, 7.3025) -- (3.079, 7.3025);
  \path[draw=black,fill=black,miter limit=10.0] (2.9401, 7.3025) -- (3.1253, 7.3951) -- (3.079, 7.3025) -- (3.1253, 7.2099) -- cycle;

    \node[text=black,anchor=south] (text5161) at (5.1065, 7.3819){Election Info};
  \path[draw=black] (7.9375, 7.6994) rectangle (10.0542, 6.1119);
    \node[text=black,anchor=south,align=center] (text3020) at (8.9958, 6.45){WebSocket\\Server};
  \path[draw=black,rounded corners=0.3969cm] (11.1125, 8.2285) rectangle (14.3875, 5.5827);
    \node[text=black,anchor=south] (text4557) at (12.7, 7.7788){Docker};

  \path[draw=black] (11.6417, 7.6994) rectangle (13.9583, 6.1119);
    \node[text=black,anchor=south,align=center] (text1145) at (12.8, 6.45){PostgreSQL\\DB};

  \path[draw=black,miter limit=10.0] (10.2227, 6.9056) -- (11.4731, 6.9056);
  \path[draw=black,fill=black,miter limit=10.0] (10.0838, 6.9056) -- (10.269, 6.9982) -- (10.2227, 6.9056) -- (10.269, 6.813) -- cycle;

  \path[draw=black,fill=black,miter limit=10.0] (11.612, 6.9056) -- (11.4268, 6.813) -- (11.4731, 6.9056) -- (11.4268, 6.9982) -- cycle;

  \path[draw=black] (6.8792, 3.9952) rectangle (14.6844, 0.0265);
  \node[text=black,anchor=south] (text5925) at (10.795, 0.1058){\textbf{Tallier 2}};

  \path[draw=black,rounded corners=0.3969cm] (7.4083, 3.466) rectangle (10.5833, 0.8202);
  \node[text=black,anchor=south] (text3823) at (8.9958, 0.9){Docker};

  \path[draw=black] (7.9375, 2.9369) rectangle (10.0542, 1.3494);
  \node[text=black,anchor=south,align=center] (text350) at (8.9958, 1.7){WebSocket\\Server};

  \path[draw=black,rounded corners=0.3969cm] (11.1125, 3.466) rectangle (14.3875, 0.8202);
  \node[text=black,anchor=south] (text2302) at (12.7, 0.9){Docker};

  \path[draw=black] (11.6417, 2.9369) rectangle (13.9583, 1.3494);
  \node[text=black,anchor=south,align=center] (text5028) at (12.8, 1.7){PostgreSQL\\DB};

  \path[draw=black,miter limit=10.0] (10.2227, 2.1431) -- (11.4731, 2.1431);

  \path[draw=black,fill=black,miter limit=10.0] (10.0838, 2.1431) -- (10.269, 2.2357) -- (10.2227, 2.1431) -- (10.269, 2.0505) -- cycle;
  \path[draw=black,fill=black,miter limit=10.0] (11.612, 2.1431) -- (11.4268, 2.0505) -- (11.4731, 2.1431) -- (11.4268, 2.2357) -- cycle;
  \path[draw=black,miter limit=10.0] (8.9958, 3.1054) -- (8.9958, 5.9433);

  \path[draw=black,fill=black,miter limit=10.0] (8.9958, 2.9665) -- (8.9032, 3.1517) -- (8.9958, 3.1054) -- (9.0884, 3.1517) -- cycle;

  \path[draw=black,fill=black,miter limit=10.0] (8.9958, 6.0822) -- (9.0884, 5.897) -- (8.9958, 5.9433) -- (8.9032, 5.897) -- cycle;

    \node[text=black,anchor=south west] (text6122) at (9.234, 4.4185){TLS 1.3};

  \path[draw=black] (6.8792, 13.2556) rectangle (11.1125, 9.5515);
    \node[text=black,anchor=south] (text4651) at (8.9958, 12.7529){\textbf{Web Server \& UI}};

  \path[draw=black,rounded corners=0.3969cm] (7.3422, 12.4619) rectangle (10.5172, 9.816);
    \node[text=black,anchor=south] (text1005) at (8.9429, 12.0121){Docker};

  \path[draw=black,miter limit=10.0,rounded corners=0.2cm] (7.8714, 11.139) -- (1.4552, 11.139) -- (1.4552, 7.8679);
  \path[draw=black,fill=black,miter limit=10.0] (1.4552, 7.729) -- (1.3626, 7.9142) -- (1.4552, 7.8679) -- (1.5478, 7.9142) -- cycle;
    \node[text=black,anchor=south] (text9573) at (4.1275, 11.2183){HTTPS};

  \path[draw=black] (7.8714, 11.9327) rectangle (9.988, 10.3452);
    \node[text=black,anchor=south,align=center] (text8914) at (8.9429, 10.4){Static\\Content\\Supplier};
  \path[draw=black,miter limit=10.0] (2.9104, 6.5088) -- (7.769, 6.5088);

  \path[draw=black,fill=black,miter limit=10.0] (7.9079, 6.5088) -- (7.7227, 6.4161) -- (7.769, 6.5088) -- (7.7227, 6.6014) -- cycle;
  \node[text=black,anchor=south] (text8303) at (4.9477, 5.9796){Ballot};

  \path[draw=black,miter limit=10.0,rounded corners=0.2cm] (7.9375, 2.54) -- (2.1828, 2.54) -- (2.1828, 5.9433);
  \path[draw=black,fill=black,miter limit=10.0] (2.1828, 6.0822) -- (2.2754, 5.897) -- (2.1828, 5.9433) -- (2.0902, 5.897) -- cycle;
  \node[text=black,anchor=south] (text2356) at (4.9213, 2.5929){Election Info};

  \path[draw=black,miter limit=10.0,rounded corners=0.2cm] (1.4552, 6.1119) -- (1.4552, 1.7463) -- (7.769, 1.7463);
  \path[draw=black,fill=black,miter limit=10.0] (7.9079, 1.7463) -- (7.7227, 1.6536) -- (7.769, 1.7463) -- (7.7227, 1.8389) -- cycle;
  \node[text=black,anchor=south] (text3610) at (5.1594, 1.27){Ballot};

  \path[draw=black] (0.0, 7.6994) rectangle (2.9104, 6.1119);
  \node[text=black,anchor=south,align=center] (text473) at (1.4552, 6.4){\textbf{Client}\\(inside browser)};

\end{tikzpicture}
    \caption{A high-level block diagram of the demo's system.}
    \label{fig: block-diagram}
    \end{center}
\end{figure*}

\subsection{Web server and user interface}
The Web server is implemented using the {\sf Flask} and {\sf Jinja2} libraries, which render nearly static Web pages that serve as the user interface for the demo. Importantly, the Web server does not store any information or connect to the {\sf PostgreSQL} database,
thus ensuring a stateless and lightweight design. For real-world deployment, it is recommended to host the Web server on a separate machine in order to mitigate potential security risks.

The client-side code is developed using {\sf HTML} and {\sf JavaScript}. It is responsible for presenting to the voter the election form (through which the voter can conveniently input her or his preferred ranking). After the voter had submitted her vote, the client generates the corresponding ballot matrix, computes the corresponding secret shares according to the protocol, and distributes them to the talliers.

\subsection{The tallier module}
The tallier module is implemented using the {\sf websocket} and {\sf asyncpg} libraries. Communication between the user and any given tallier is done by an encrypted TLS 1.3 {\sf WebSocket} channel, for ensuring secure data transmission. Users must register with all talliers, create an election, and submit their votes through this interface.

The MPC computations are implemented using {\sf async} functions, which align with the natural structure of the protocol. This design closely mirrors implementations in referenced research papers, making the codebase easier to verify and extend.

To facilitate future development, a robust unit-testing setup has been implemented. This setup enables developers to verify the correctness of the implementation after modifications, and to streamline the development process.

\subsection{User's usage flow}
Upon entering the demo, new users are directed to
a registration page, in which they fill out a registration form by providing their email address and full name.
Consequently, a password will be sent to their email inbox, ensuring secure account setup.
After registration, users return to the main page of the demo in which they can log in. 
The top bar on that page displays the communication status with each of the talliers, for real-time feedback on system connectivity.

Once logged in, users are directed to their dashboard, which displays all the elections they are managing or participating in as voters.
(For new users, the list is empty.) From the dashboard, users can start a new election (for which they will be the managers), vote in an ongoing election (in which they are voters), and view the results of completed elections.

If a user wishes to start a new election, she must fill in an election creation form, see Figure \ref{fig: demo-create-election}. In that form she will configure various election parameters, including: election title,
the voting rule (currently we implement two order-based rules, {\sc Copeland} and {\sc Maximin}), the names of all candidates, the desired number of winning candidates, and the emails of all eligible voters. The list of voters' emails can be provided either manually or by a CSV file in which each row holds an email address of a single voter.

\begin{figure}[h!!]
    \centering
    \includegraphics[width=0.5\linewidth]{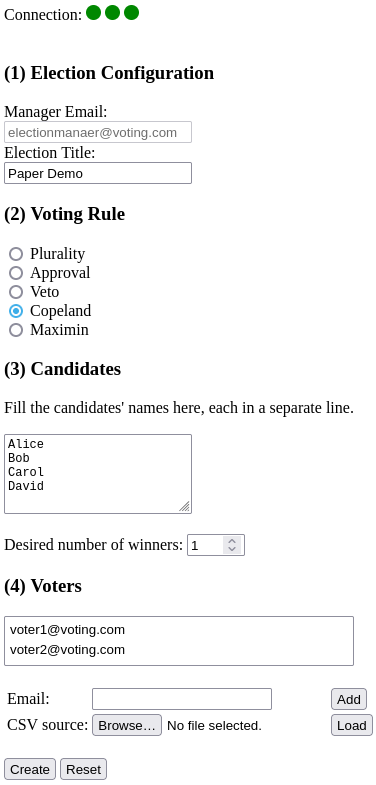}
    \caption{Election creation form}
    \label{fig: demo-create-election}
\end{figure}

Upon the successful creation of a new election, the election manager is redirected back to the dashboard. From there, the manager can start the voting session for the election. By doing so, an email with election details and a link for submitting the vote will be sent to all eligible voters.

Each voter, upon accessing the voting link, fills in her preferences, by determining her ranking of the candidates (where ties are allowed), see Figure \ref{fig: demo-vote}. The client translates that ranking to a corresponding ballot matrix and then generates secret shares of its (above diagonal) entries and distributes them to the talliers.
The talliers perform the proper validation checks, and if the ballot is verified they inform the voter that they had successfully received her ballot.

The election manager ends the election once the pre-determined election period had elapsed. At that point the talliers finalize the process by securely computing the winners and sending to all voters and to the manager an email with the election results.

\begin{figure}
    \centering
    \includegraphics[width=0.5\linewidth]{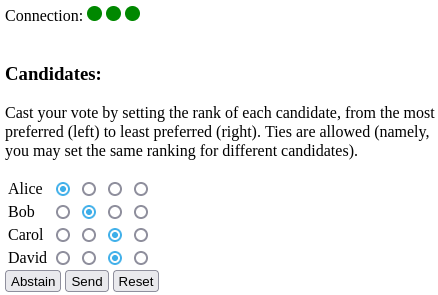}
    \caption{Voting page}
    \label{fig: demo-vote}
\end{figure}

\subsection{Runtime evaluation}\label{costanalysis}
We have implemented our protocol over the field $\ZZ_p$ where $p=2^{31}-1$. As explained in Section \ref{ppp}, such a setting is sufficient for any conceivable election scenario.

The two parameters that affect the protocol's runtime are $D$, the number of talliers, and $M$, the number of candidates. Based on our discussion in Section \ref{overallsecurity} we set the value of $D$ to $D \in \{3,5,7,9\}$.  
The value of $M$ was selected from the range $M \in \{3,5,10,15,20\}$. This range is representative of real-world elections, which rarely feature higher numbers of candidates or parties. In most national elections the number of competing entities remains modest—for example, parliamentary elections in many European countries typically involve fewer than 20 parties. Moreover, it is cognitively unrealistic to expect voters to meaningfully rank a large number of candidates. Empirical evidence suggests that individuals can comfortably evaluate and rank only a small number of alternatives (see, e.g.,~\citep{lau2001advantages,miller1956magical}). Therefore, our choice of $M$ values reflects both realistic election settings and voter capabilities.

As for the number of voters, $N$, it affects only the runtime for validating incoming ballots. The time for computing the final election results, on the other hand, is independent of $N$. The runtimes reported in our experiments correspond to the cost of validating either a single ballot or a fixed-size batch of $b=64$ ballots. Therefore, the total validation runtime for an election with $N$ voters is obtained by multiplying the per-ballot cost by $N$, or multiplying the $b$-batch cost by $N/b$.


For each of the two voting rules, {\sc Copeland} and {\sc Maximin}, we measured the average runtime for validating the cast ballots, as well as measuring the time to compute the final election results at the end of the election period, and present their dependence on $D$ and $M$. Indeed, the validation of a single ballot or a batch of ballots does not depend on $N$, while the computation of the final election results depends only on the dimension $M \times M$ of the aggregated ballot matrix $P = \sum_{n=1}^N P_n$, Eq. (\ref{pdef}), but not on the number $N$ of addends in the sum that defines it.

All experiments were executed on three servers that resided on the same LAN. Each of the servers had two AMD EPYC 7543 32-core processors, 128 GB RAM, and a "Gentoo Linux" operation system. 

In the first experiment we measured the time for validating a single ballot in each of the two rules. 
As our implementation allows ties, the presented runtimes are for Sub-protocol \ref{verify}. Recall that the same protocol is used for both rules $-$ {\sc Copeland} and {\sc Maximin}. The results are shown in Figure \ref{fig: validsingleballot}. As can be seen, larger values of $D$ entail computing more secret shares and sending more messages and, consequently, the overall runtime increases with $D$. As for $M$, the quadratic dependence of the runtime on $M$ is clearly shown in the figure.

\begin{figure*}[h!!]
    \begin{center}
        \begin{tikzpicture}
        \begin{axis}[
            xlabel={$M$}, ylabel={Time ($ms$)},
            xmin=2, xmax=21,  ymin=0, ymax=270,
            xtick={3,5,10,15,20},
            ytick={0,50,100,150,200,250},
            legend pos=north west, ymajorgrids=true, grid style=dashed,
        ]        
        \addplot[solid,mark=*] coordinates {(3,11)(5,13)(10,28)(15,53)(20,69)};
            \addlegendentry{$D=3$}
        \addplot[color=red,dashed,mark=*] coordinates {(3,13)(5,17)(10,34)(15,67)(20,118)};
            \addlegendentry{$D=5$}
        \addplot[color=blue,dotted,mark=*] coordinates {(3,19)(5,25)(10,51)(15,102)(20,181)};
            \addlegendentry{$D=7$}
        \addplot[color=brown,dashdotted, mark=*] coordinates {(3,26)(5,33)(10,71)(15,140)(20,261)};
            \addlegendentry{$D=9$}
        \end{axis}
        \end{tikzpicture}
    \caption{Runtimes (milliseconds) for validating a single ballot in each of the two rules, as a function of the number of candidates, $M$, and number of talliers, $D$.}
    \label{fig: validsingleballot}
    \end{center}
\end{figure*}

In the second experiment we measured the time for validating a batch of ballots, where we used batches of size 64. By validating several ballots in parallel, we can compute concurrently independent multiplications and thus reduce the overall runtime. In Figure \ref{fig: validbatch} we show the ratio between the time to validate a batch of 64 ballots and the time to validate 64 single ballots. All shown values are smaller than 1 (namely, batching always saves time), where the more significant improvements (i.e., smaller ratios) were obtained for higher values of $M$ and $D$. For example, if we concentrate on the highest parameters in our experiments, $M=20$ and $D=9$, then the average time to validate a single ballot, when using a 64-batch validation strategy, is roughly 50 msec, as implied by the values in Figures \ref{fig: validsingleballot} and \ref{fig: validbatch}. That means that within a day it is possible to validate over 1,700,000 ballots. Clearly, those numbers can be further improved by using larger batches.

\begin{figure*}[h!!]
    \begin{center}
        \begin{tikzpicture}
        \begin{axis}[
            xlabel={$M$}, ylabel={Improvement factor},
            xmin=2, xmax=21,  ymin=0, ymax=1,
            xtick={3,5,10,15,20},
            ytick={0,0.2,0.4,0.6,0.8},
            legend pos=north east, ymajorgrids=true, grid style=dashed,
        ]        
        \addplot[solid,mark=*] coordinates {(3,0.91)(5,0.83)(10,0.64)(15,0.58)(20,0.48)};
            \addlegendentry{$D=3$}
        \addplot[color=red,dashed,mark=*] coordinates {(3,0.78)(5,0.66)(10,0.4)(15,0.35)(20,0.31)};
            \addlegendentry{$D=5$}
        \addplot[color=blue,dotted,mark=*] coordinates {(3,0.56)(5,0.39)(10,0.29)(15,0.23)(20,0.21)};
            \addlegendentry{$D=7$}
        \addplot[color=brown,dashdotted, mark=*] coordinates {(3,0.31)(5,0.31)(10,0.24)(15,0.20)(20,0.18)};
            \addlegendentry{$D=9$}
        \end{axis}
        \end{tikzpicture}
    \caption{The improvement factor in runtime when validating batches of 64 ballots.}
    \label{fig: validbatch}
    \end{center}
\end{figure*}

By choosing larger-size batches we could reduce the average runtime for validating any single ballot even further. However, while selecting larger batch sizes translates to reduced computation times, it also translates to increased response times, since the validation of any single ballot will be delayed until a sufficient number of ballots is received in order to form a batch of the desired size. If we wish to issue to the voter an immediate response regarding the validity of the ballot that he had submitted (so that in case the received ballot was not validated the voter would be asked to submit a new ballot), then the system should perform single ballot validation or set a short time window and then validate in parallel the batch of all ballots that were received during that window. 

In the third and fourth experiments we measured the time to compute the winners in {\sc Copeland} and {\sc Maximin} rules; those runtimes are shown in Figures \ref{fig: winnersCopeland} and  \ref{fig: winnersMaximin} respectively. 

\begin{figure*}[h!!]
    \begin{center}
        \begin{tikzpicture}
        \begin{axis}[
            xlabel={$M$}, ylabel={Time ($s$)},
            xmin=2, xmax=21,  ymin=0, ymax=11,
            xtick={3,5,10,15,20},
            ytick={0,2,4,6,8,10},
            legend pos=north west, ymajorgrids=true, grid style=dashed,
        ]        
        \addplot[solid,mark=*] coordinates {(3,0.04)(5,0.119)(10,0.488)(15,1.254)(20,2.410)};
            \addlegendentry{$D=3$}
        \addplot[color=red,dashed,mark=*] coordinates {(3,0.065)(5,0.188)(10,0.849)(15,2.331)(20,4.890)};
            \addlegendentry{$D=5$}
        \addplot[color=blue,dotted,mark=*] coordinates {(3,0.098)(5,0.302)(10,1.413)(15,3.539)(20,7.657)};
            \addlegendentry{$D=7$}
        \addplot[color=brown,dashdotted, mark=*] coordinates {(3,0.133)(5,0.361)(10,1.720)(15,5.033)(20,10.129)};
            \addlegendentry{$D=9$}
        \end{axis}
        \end{tikzpicture}
    \caption{Runtimes (seconds) for computing election results in {\sc Copeland}.}
    \label{fig: winnersCopeland}
    \end{center}
\end{figure*}

\begin{figure*}[h!!]
    \begin{center}
        \begin{tikzpicture}
        \begin{axis}[
            xlabel={$M$}, ylabel={Time ($s$)},
            xmin=2, xmax=21,  ymin=0, ymax=3,
            xtick={3,5,10,15,20},
            ytick={0,0.5,1,1.5,2,2.5,3},
            legend pos=north west, ymajorgrids=true, grid style=dashed,
        ]        
        \addplot[solid,mark=*] coordinates {(3,0.013)(5,0.04)(10,0.168)(15,0.386)(20,0.719)};
            \addlegendentry{$D=3$}
        \addplot[color=red,dashed,mark=*] coordinates {(3,0.022)(5,0.072)(10,0.301)(15,0.770)(20,1.396)};
            \addlegendentry{$D=5$}
        \addplot[color=blue,dotted,mark=*] coordinates {(3,0.032)(5,0.105)(10,0.479)(15,1.054)(20,2.033)};
            \addlegendentry{$D=7$}
        \addplot[color=brown,dashdotted, mark=*] coordinates {(3,0.037)(5,0.121)(10,0.547)(15,1.391)(20,2.728)};
            \addlegendentry{$D=9$}
        \end{axis}
        \end{tikzpicture}
    \caption{Runtimes (seconds) for computing election results in {\sc Maximin}.}
    \label{fig: winnersMaximin}
    \end{center}
\end{figure*}

\section{Related work}\label{related}
Secure e-voting can be approached using various cryptographic techniques. 
The earliest suggestion is that of \citet{chaum1981untraceable}, who suggested using a mix network (mixnet). The idea is to treat the ballots as ciphertexts. 
Voters encrypt  their ballots
and agents collect  and shuffle  these messages and thus anonymity of the ballots is preserved. Other studies followed and improved this model, e.g. \citet{sako1995receipt, adida2008helios, boneh2002almost,jakobsson2002making,lee2003providing,neff2001verifiable}.
However, while such systems preserve anonymity, the talliers are exposed to the actual ballots. The mere anonymity of the ballots might not provide sufficient security and this may encourage voters to abstain or vote untruthfully 
\citep{dery2021fear}. 

One of the approaches towards achieving tally-hiding privacy, and not just anonymity, is by employing homomorphic encryption, which allows performing computations on encrypted values without decrypting them first. The most common ciphers of that class are additively homomorphic, in the sense that the product of several ciphertexts is the encryption of the sum of the corresponding plaintexts.
Such encryptions are suitable for secure voting, as was first suggested by \citet{Benaloh86a}.
The main idea is to encrypt the ballots using a public-key homomorphic cipher.
An agent aggregates the encrypted ballots and then sends an aggregated encrypted value to the tallier. The tallier decrypts the received ciphertexts and recovers the aggregation of the ballots, but is never exposed to the ballots themselves. Secure voting protocols that are based on homomorphic encryption were presented in e.g. \citep{cramer1997secure, damgaard2010generalization, fan2020hse, hevia2004electronic,  priya2018blockchain,rezaeibagha2019provably,yang2018secure}. 

While most studies on secure voting offered protocols for securing the voting process, some studies considered the question of private execution of the computation that the underlying voting rule dictates.
We begin our survey with works that considered score-based voting rules.
\citet{canard2018practical} considered
the {\sc Majority Judgment} (MJ) voting rule~\citep{balinski2007theory}. 
They first translated the complex control flow and branching instructions that the MJ rule entails into a branchless algorithm; then they devised a privacy-preserving implementation of it using homomorphic encryption, distributed decryption schemes, distributed evaluation of Boolean gates, and distributed comparisons.
\citet{NairBK15} suggested to use secret sharing for the
tallying process in {\sc Plurality} voting \citep{brandt2016handbook}. Their protocol provides anonymity but does not provide
perfect secrecy as it reveals the final aggregated score of each candidate. In addition, their protocol is vulnerable to cheating attacks, as it does not include means for detecting illegal votes.
\citet{kusters2020ordinos} introduced a secure end-to-end verifiable tally-hiding e-voting system, called Ordinos, that implements the {\sc Plurality} rule and outputs the $K$ candidates that received the highest number of votes, or those with number votes that is greater than some threshold.
\citet{dery2021fear} offered a solution based on MPC in order to securely determine the winners in elections governed by score-based voting rules, including
{\sc Plurality}, {\sc Approval}, {\sc Veto}, {\sc Range} and {\sc Borda}. Their protocols offer perfect privacy and very attractive runtimes.

Recently, few researchers have begun looking at order-based voting rules.
\citet{haines2020verifiable} proposed a solution for the order-based {\sc Schulze} rule \citep{schulze2011new}. Their solution
does not preserve the privacy of voters who are indifferent between some pairs of candidates. In addition, their solution is not scalable to large election campaigns, as they report a runtime of 25 hours for an election with 10,000 voters.
\citet{hertel2021extending} proposed solutions for {\sc Copeland}, {\sc Maximin} and {\sc Schulze} voting rules. The evaluation of
the {\sc Schulze} method took 135 minutes for 5 candidates and 9 days, 10 hours, and 27 minutes for 20 candidates.
Finally, \citet{cortier2022toolbox} considered the {\sc Single Transferable Vote} (STV) rule, which is a multi-stage rule, as well as {\sc Schulze} rule. Even though their method is much more efficient than the one in \citet{hertel2021extending} for the {\sc Schulze} rule, it is still not scalable to large election settings, as it took 8 hours and 50 minutes for $N=1024$ voters and $M=20$ candidates.

Our study is the first one that proposes protocols for order-based rules that are both fully private and lightweight so they offer a feasible and fast solution even for democracies of millions of voters, as our experiments indicate (see Section \ref{costanalysis}). The overwhelming advantage, in comparison to the above-mentioned recent works, is achieved mainly by our novel idea of distributed tallying.

\section{Conclusion}\label{conc}
Here we summarize our study (Section \ref{summary}), describe a real-life implementation of the proposed protocol and system (Section \ref{gentoo}), and outline future research directions (Section \ref{future}).

\subsection{Summary}\label{summary}
In this study we presented a protocol for the secure computation of order-based voting rules. Securing the voting process is an essential step toward a fully online voting process.
A fundamental assumption in all secure voting systems that rely on fully trusted talliers (that is, talliers who receive the actual ballots from the voters) is that the talliers do not misuse the ballot information and that they keep it secret.
In contrast, our protocol significantly reduces the trust vested in the talliers, as it denies the talliers access to the actual ballots and utilizes MPC techniques in order to compute the desired output without allowing any party access to the inputs (the private ballots). Even in scenarios where some (a minority) of the talliers betray that trust, privacy is ensured. Such a reduction of trust in the talliers is essential to increase the confidence of the voters in the voting system so that they would be further motivated to exercise their right to vote and, moreover, vote according to their true preferences, without fearing that their private vote will be disclosed to anyone.

Our protocol offers perfect ballot secrecy: 
the protocol outputs the identity of the winning candidates, but the voters as well as the talliers remain oblivious to any other related information, such as the actual ballots or any other value that is computed during the tallying process (e.g., how many voters preferred one candidate over the other).
The design of a mechanism that offers perfect ballot secrecy must be tailored to the specific voting rule that governs the elections. 
We demonstrated our solution on the following order-based rules: {\sc Copeland, Maximin, Kemeny-Young} and {\sc Modal-Ranking}.\footnote{The discussion of the latter two rules is deferred to the appendix.} 
Ours is the first study that offers a fully private solution for order-based voting rules that is lightweight and practical for elections in real-life large democracies. 

While several cryptographic voting protocols exist, they primarily focus on securing the casting and verification of ballots, rather than on computing complex voting outcomes under strict privacy constraints. To the best of our knowledge, there is currently no protocol that supports privacy-preserving computation of order-based voting rules with perfect ballot secrecy. Moreover, in related domains such as privacy-preserving collaborative filtering, prior art~\citep{ShmueliT20, TassaH22} shows that secure multiparty computation based on secret sharing is significantly more efficient than comparable approaches using homomorphic encryption. This performance gap reinforces the practicality of our approach for real-world elections. 

\subsection{An implementation of the proposed protocol and system in the Gentoo council elections}\label{gentoo}
The Gentoo Linux project (https://www.gentoo.org/) is a community-driven project with a
decentralized governance model. Its two main organizational units are
the Gentoo Foundation, which is a legal entity that holds trademarks,
domains, and other assets related to Gentoo, and the Gentoo Council---a seven-member elected body that is responsible for policy decisions as well as resolving disputes.

The Gentoo Council holds an annual election for selecting seven council members from a pool of candidates. During a two-week voting period, the voters submit ranked ballots to three election officials (talliers). These officials then perform a transparent tallying process, {\em without ballot secrecy}, and publish the results on the organization's mailing list.

Through private communication with Gentoo officials, we learned that the original election rule was
{\sc Schulze}  \citep{schulze2011new}. We demonstrated to the Gentoo officials our proposed voting model with a functional prototype implementation, which underwent a private evaluation. Following a successful vote by the election committee, they decided to adopt {\sc Copeland} rule instead of {\sc Schulze}'s, using our system as a pilot program for their next election cycle.

\subsection{Future research directions}\label{future}
The present study suggests several directions for future research:

(a) \textbf{Multi-winner elections.} Typical voting rules are usually oblivious to external social constraints, and they determine the identity of the winners solely based on private ballots.
This is the case with the voting rules that we considered herein. In contrast, {\em multi-winner} election rules are designed specifically for selecting candidates that would satisfy the voters the most~\citep{elkind2017multiwinner,faliszewski2017multiwinner}, in the sense that they also comply with additional social conditions (e.g., that the selected winners include a minimal number of representatives of a specific gender, race, region, etc.). This problem has unique features, and therefore requires its own secure protocols. 
Examples of voting rules that are designed for this purpose are {\sc Chamberlin-Courant}~\citep{chamberlin1983representative} and {\sc Monroe} ~\citep{monroe1995fully}.

(b) \textbf{A hierarchical tallier model.}
We assumed a ``flat'' tallier model, where all talliers are operating on all ballots. However, in large voting systems, a hierarchical tallier may be more suitable. For example, in the US, it may be more suitable to use a hierarchy by county (first level), state (second level), and national (third and highest level). A modification of our protocol for such settings is in order.

(c) \textbf{Malicious talliers.} 
Our protocol assumes that the talliers are semi-honest,
i.e., that they follow the prescribed protocol correctly. The semi-honesty of the talliers can be ensured in practice by securing the software and hardware of the talliers. However, it is possible to design an MPC protocol that would be immune even to malicious talliers that may deviate from the prescribed protocol. While such protocols are expected to have significantly higher runtimes, they could enhance even further the security of the system and the trust of voters in the preservation of their privacy.

\newpage
\appendix

\section{Secure computations over secret shared data}\label{scossd}

In Section \ref{principles} we described the sub-protocols that our protocol utilizes. Here we provide an inside look at each of those four MPC sub-protocols.

\subsection{Evaluating arithmetic functions}\label{mpc}
Let $f(x_1,\ldots,x_L)$ be an arithmetic function of $L$ variables. 
Assume that the talliers hold $D'$-out-of-$D$ shares of each of the inputs, $x_1,\ldots,x_L$. They wish to compute $D'$-out-of-$D$ shares of
$y:=f(x_1,\ldots,x_L)$ without learning any information on any of the inputs nor on the output $y$.
This challenge translates to the following two basic computational tasks: given $D'$-out-of-$D$ shares of two secret values $u,v \in \ZZ_p$, compute $D'$-out-of-$D$ 
shares
of $au+bv$ (where $a$ and $b$ are public field elements), and of $u \cdot v$, without revealing $u$, $v$ or the computed results ($au+bv$ and $u \cdot v$).

Let $u_d$ and $v_d$ denote the shares held by $T_d$, $d \in [D]$, corresponding to the two secrets $u$ and $v$, respectively.
Computing shares of the linear combination $au+bv$ is easy, since $\{ au_d+bv_d: d \in [D]\}$ is a valid $D'$-out-of-$D$ sharing of $au+bv$, as implied by the linearity of secret sharing. Consequently, each tallier can locally compute his share of the linear combination from his shares of the two inputs without any interaction with his peers.

Computing secret shares of the product $u \cdot v$ is more involved and requires the talliers to interact.
In our protocol, we use the multiplication protocol proposed by \citet{DN07}, enhanced by a work by \citet{ChidaGHIKLN18} that demonstrates some performance optimizations.
We consider that computation as a black-box since the details of that computation are not relevant for our needs. Interested readers may refer to \cite{DN07,ChidaGHIKLN18} for more details.

\subsection{Secure Comparison}\label{secure_comparison}
Assume that $T_1,\ldots,T_D$ hold $D'$-out-of-$D$ shares of two nonnegative integers $u$ and $v$, where both $u$ and $v$ are smaller than $p$, which is the size of the underlying field $\ZZ_p$. 
The goal is to compute
$D'$-out-of-$D$ secret shares of the bit $1_{u<v}$ without learning any other information on $u$ and $v$. 
Such a protocol is called secure comparison. In our protocol we used the secure comparison protocol proposed by Nishide and Ohta \citet{NO07}, with some performance enhancements that we introduced. 

\subsection{Secure testing of positivity}\label{positivity}
Let $u$ be an integer in the range $[-N,N]$.
Assume that $T_1,\ldots,T_D$ hold $D'$-out-of-$D$ shares of $u$, where the underlying field is $\ZZ_p$, and $p>2N$. Our goal is to design an MPC protocol that allows the talliers to obtain $D'$-out-of-$D$ shares of the bit $1_{u>0}$, without learning any additional information on $u$.

One way of solving such a problem would be to 
set $v=u+N$ and then test whether $N<v$ using the protocol of secure comparison from Section \ref{secure_comparison}. 
However, we propose a more efficient method for testing positivity. To this end, we state and prove the following lemma.

\begin{lemma}\label{lemmapositiv}
Under the above assumptions, $u>0$ iff the LSB of $(-2u \mod p)$ is 1.
\end{lemma}

\begin{proof}
Recall that $u \in [-N,N]$ and $N< \frac{p}{2}$.
Assume that $u>0$, namely, that $u \in (0,N]$. Hence, $-2u \in [-2N,0)$. Therefore, as $2N<p$, $(-2u \mod p) = -2u+p $. As that number is odd, its LSB is 1.
If, on the other hand, $u \leq 0$, then $u \in [-N,0]$. 
Hence, $-2u \in [0,2N] \subset [0,p-1]$. Therefore, $(-2u \mod p) = -2u $. As that number is even, its LSB is 0.
\end{proof}

Hence, the talliers can compute shares of $1_{u>0}$ by computing shares of the LSB of only one shared secret ($-2u$ in this case).
In order to compute shares of $1_{u<v}$ using the protocol of \cite{NO07} it is necessary to compute shares of three LSBs of shared secrets. Hence, computing shares of $1_{u>0}$ based on Lemma \ref{lemmapositiv} is more efficient, roughly by a factor of 3, than computing such shares based on the general secure comparison protocol of \cite{NO07}. In our protocol we compute shares of positivity bits based on this efficient computation.

\subsection{Secure testing of equality to zero}\label{is-zero}
Let $u$ be an integer in the range $[-N,N]$.
Assume that the talliers $T_1,\ldots,T_D$ hold $D'$-out-of-$D$ shares of $u$, where the underlying field is $\ZZ_p$, and $p>2N$. 
Our goal is to design an MPC protocol that enables the talliers to compute $D'$-out-of-$D$ shares of the bit $1_{u=0}$, without learning any additional information on $u$.

A na\"{i}ve approach would be to use the MPC positivity testing from Section \ref{positivity}, once for $u$ and once for $-u$. Clearly, $u=0$ iff both of those tests fail. However, we can solve that problem in a more efficient manner, as we proceed to describe.

Fermat's little theorem states that if $u \in \ZZ_p \setminus \{0\}$ then $u^{p-1}=1$ mod $p$. Hence, $ 1_{u\neq 0} = \left( u^{p-1} \mod p \right)$. Therefore, shares of the bit $ 1_{u\neq 0}$ can be obtained by simply computing $u^{p-1}$ mod $p$. The latter computation can be carried out by the square-and-multiply algorithm with up to $2 \lceil \log p \rceil$ consecutive multiplications.
Finally, as $1_{u = 0} = 1 -1_{ u \neq 0}$, then shares of $1_{ u \neq 0}$ can be readily translated into shares of $1_{u=0}$. The cost of the above described computation is significantly smaller than the cost of the alternative approach that performs positivity testing of both $u$ and $-u$.

\section{Other order-based rules}\label{other}
Here we consider two other order-based rules $-$ {\sc Kemeny-Young} ~\citep{kemeny1959mathematics,young1995optimal}
and {\sc Modal Ranking}~\citep{caragiannis2014modal}, and describe secure protocols for implementing them.

\subsection{Kemeny-Young}
\subsubsection{Description}

The ballot matrices here, $P_n$, $n \in [N]$, are as in {\sc Maximin} and, as before, $P=\sum_{n \in [N]}P_n$ is the aggregated ballot matrix.
For every $1 \leq m \neq \ell \leq N$, $P(m,\ell)$ equals the number of voters who ranked $C_m$ strictly higher than $C_\ell$.
The matrix $P$ induces a score for each of the possible $M!$ rankings over
$\bC = \{C_1,\ldots,C_M\}$. Let $\rho = (\sigma_1,\ldots,\sigma_M)$, where $(\sigma_1,\ldots,\sigma_M)$ is a permutation of $\{1,\ldots,M\}$, be such a ranking. Here, for each $m \in [M]$, $\sigma_m$ is the rank of $C_m$ in the ranking. For example, if $M=4$ then 
$\rho = (3,1,4,2)$ is the ranking in which $C_2$ is the top candidate and $C_3$ is the least favored candidate. The score of a ranking $\rho$ is defined as
\be w(\rho) = \sum_{\ell,m \in [M]: \sigma_m<\sigma_\ell} P(m,\ell) \,;
\label{rhoscore}\ee
namely, one goes over all pairs of candidates $C_m$ and
$C_\ell$ such that $\rho$ ranks $C_m$ higher than $C_\ell$ (in the sense that $\sigma_m<\sigma_\ell$) and adds up the number of voters who agreed with this pairwise comparison. The ranking $\rho$ with the highest score is selected, and the $K$ leading candidates in $\rho$ are the winners of the election.

\subsubsection{Secure implementation}
As in Protocol \ref{algvote} for {\sc Copeland} and {\sc Maximin}, each voter $V_n$ secret shares the entries of his ballot matrix $P_n$ among the $D$ talliers using a $D'$-out-of-$D$ scheme. Since the ballot matrices here are as in {\sc Maximin}, where ties are allowed, their validation is carried out as described in Section \ref{maximinties}.

After validating the cast ballots, the talliers add up their shares of $P_n$, for all validated ballot matrices, and then get $D'$-out-of-$D$ shares of each of the non-diagonal entries in $P$.

To compute secret shares of the score of each possible ranking $\rho$, the talliers need only to perform summation according to Eq. (\ref{rhoscore}). Note that that computation does not require the talliers to interact. Finally, it is needed to find the ranking with the highest score. That computation can be done by performing secure comparisons, as described in Section \ref{secure_comparison}.

\subsection{Modal ranking}
In {\sc Modal Ranking}, every voter submits a ranking of all candidates, where the ranking has no ties. Namely, if $R_{\bC}$ is the set of all $M!$ rankings of the $M$ candidates in $\bC$, the ballot of each voter is a selection of one ranking from $R_{\bC}$, as determined by his preferences. The rule then outputs the ranking that was selected by the largest number of voters (i.e., the mode of the distribution of ballots over $R_{\bC}$). In case there are several rankings that were selected by the greatest number of voters, the rule outputs all of them, and then the winners are usually the candidates whose average position in those rankings is the highest.

The {\sc Modal Ranking} rule is an order-based voting rule over $\bC$, but it is equivalent to the {\sc Plurality} score-based voting rule (see~\citet{brandt2016handbook}) over the set of candidate rankings $R_{\bC}$. Hence, it can be securely implemented 
by the protocol that was presented in \citet{dery2021fear}.

\newpage
\bibliographystyle{plain}

\end{document}